\documentclass{article}

\pdfoutput=1
\usepackage{arxiv}

\usepackage[utf8]{inputenc} 
\usepackage[T1]{fontenc}    
\usepackage{hyperref}       
\hypersetup{pdfauthor={Tony Ware},pdfstartview=FitH,pdfpagemode=none,bookmarksopen=false,colorlinks=true,urlcolor=blue,linkcolor=blue,citecolor=blue}

\usepackage{url}            
\usepackage{booktabs}       
\usepackage{amsfonts}       
\usepackage{nicefrac}       
\usepackage{microtype}      
\usepackage{lipsum}		
\usepackage{graphicx}
\usepackage{doi}
\usepackage[section]{placeins}
\usepackage[utf8]{inputenc}
\usepackage{amsmath,amsthm}
\usepackage{array}
\usepackage{wrapfig}
\usepackage{multirow}
\usepackage{tabularx}
\usepackage[nottoc,numbib]{tocbibind}
\usepackage{amssymb}
\usepackage{amsmath}
\usepackage{amsmath,amssymb,amsthm,mathrsfs,dsfont,epsfig,subfigure,fullpage}
\usepackage{graphics}
\usepackage{graphicx}
\usepackage{amsfonts}
\usepackage{fancyhdr}
\usepackage{latexsym,amssymb,amsmath,amsopn,bm}
\usepackage{amsthm}
\usepackage{siunitx}
\usepackage{comment}
\usepackage{mathrsfs,dsfont,epsfig,subfigure,fullpage}
\usepackage{fancyhdr}
\usepackage{fixmath}
\usepackage{color, bm, comment}
\usepackage{commath}
\usepackage[toc,page]{appendix}

\usepackage[utf8]{inputenc}
\usepackage[english]{babel}
\usepackage{siunitx}
\sisetup{output-exponent-marker=\ensuremath{\mathrm{e}}}

\newtheorem{theorem}{Theorem}[section]
\newtheorem{lemma}[theorem]{Lemma}
\newtheorem{proposition}[theorem]{Proposition}
\newtheorem{corollary}[theorem]{Corollary}

\DeclareMathOperator{\sgn}{sgn}
\newcommand{\word}[1]{\;\;\text{#1}\;\;}

\newcommand{\ps}[2]{{_{_{#1}}\!\!}#2}

\def\E{\mathbb{{E}}}
\def\V{\mathbb{{V}}}
\def\N{\mathbb{{N}}}
\def\Cov{\text{Cov}}

\theoremstyle{}
\newcommand{\thistheoremname}{}
  \newtheorem*{genericthm*}{\thistheoremname}
\newenvironment{namedthm*}[1]
  {\renewcommand{\thistheoremname}{#1}%
   \begin{genericthm*}}
  {\end{genericthm*}}

\title{A Weighted Multilevel Monte Carlo Method}


\author{
Yu Li \\
	Department of Mathematics and Statistics\\
	University of Calgary\\
	\texttt{yu.li1@ucalgary.ca} \\
\And
	Antony  Ware\thanks{Dept.~of Mathematics and Statistics, University of Calgary, 2500 University Drive NW, Calgary, Alberta , Canada. T2N 1N4} \\
	Department of Mathematics and Statistics\\
    University of Calgary\\
	\texttt{aware@ucalgary.ca} \\ 
}

\hypersetup{
pdftitle={A Weighted Multilevel Monte Carlo Method},
pdfsubject={},
pdfauthor={Antony Ware, Yu Li},
pdfkeywords={Monte Carlo, Multilevel Monte Carlo},
}

\begin{document}
\maketitle

\begin{abstract}
The Multilevel Monte Carlo (MLMC) method has been applied successfully in a wide range of settings since its first introduction by Giles \cite{Giles2008}. When using only two levels, the method can be viewed as a kind of control-variate approach to reduce variance, as earlier proposed by Kebaier \cite{Kebaier2005}. We introduce a generalization of the MLMC formulation by extending this control variate approach to any number of levels and deriving a recursive formula for computing the weights associated with the control variates and the optimal numbers of samples at the various levels. 

We also show how the generalisation can also be applied to the \emph{multi-index} MLMC method \cite{Haji-AliNobileTempone2015}, at the cost of solving a $(2^d-1)$-dimensional minimisation problem at each node when $d$ index dimensions are used.

The comparative performance of the weighted MLMC method is illustrated in a range of numerical settings. While the addition of weights does not change the \emph{asymptotic} complexity of the method, the results show that significant efficiency improvements over the standard MLMC formulation are possible, particularly when the coarse level approximations are poorly correlated.
\end{abstract}

\keywords{Monte Carlo \and Multilevel Monte Carlo \and Control Variates \and Stochastic Differential Equations}

\section{Introduction}

The Multilevel Monte Carlo method targets a situation where the goal is to compute an expectation $\overline{P} = \E[P]$ via Monte Carlo estimation, but where $P$ is either impossible or prohibitively expensive to sample directly. Instead, one has a sequence of approximations $P_l$ that may be sampled with a computational cost that increases with the index $l$, with $\overline{P}_l=\E[P_l]$ converging to $\overline{P}$ as $l$ increases. 

With single-level Monte Carlo estimate, the total error is controlled by choosing the level $l$ to ensure an acceptable bias, and increasing the number of samples to reduce the variance of the estimate. A typical scenario leads to a computational cost of $O(\epsilon^{-3})$ in order to achieve an overall error of $\epsilon$. The MLMC method offers a way to improve on this by combining the estimate at the finest level with differences of estimates at coarser levels, generating more samples at the coarser, cheaper, levels, and fewer at finer, more expensive levels. The result is a method with the bias of the finest level, but potentially much lower variance for a given computation effort, and an computational cost that can be as low as $O(\epsilon^{-2})$, matching the best-possible accuracy for Monte Carlo estimation (see, for instance, Theorem~2.1 of \cite{Giles2015}). 

The potential of the MLMC approach has lead to hundreds of papers dealing with refinements and generalisations of the original approach, along with a host of application areas. Giles \cite{Giles2015} provided a comprehensive overview of the development of MLMC methods in the first few years following its first appearance in \cite{Giles2008}. Generalisations included randomised MLMC \cite{RheeGlynn2012} (for truly unbiased estimates), Multilevel quasi-Monte Carlo (qMLMC) \cite{GilesWaterhouse2009}, combinations of MLMC with Richardson extrapolation \cite{LemairePages2017}, and Multi-index MLMC \cite{Haji-AliNobileTempone2015} (for situations where there is more than one `scale' that can be adjusted in the approximate samples, such as might arise in simulations of stochastic partial differential equations \cite{ReisingerGiles2012,BarthLang2012,ReisingerWang2018,LangPetersson2018,Petersson2020,ChadaHoelJasra2022}). 

Other applications of MLMC and qMLMC include their use for estimation, uncertainty quantification, optimal control in the context of PDEs with random coefficients \cite{CliffeGilesScheichl2011,BarthSchwabZollinger2011,AbdulleBarthSchwab2013,TeckentrupScheichlGiles2013,KuoSchwabSloan2015,AliUllmannHinze2017,LuoWang2019,Van-BarelVandewalle2019}. They have also been applied in settings involving nested expectations\cite{Giles2018}, including Picard iterations for solving semilinar parabolic PDEs \cite{WeinanHutzenthalerJentzen2021}.
In recent years there have been some applications of MLMC method to Machine Learning and Deep Learning techniques, for instance \cite{Jabarullah-KhanElsheikh2019} on uncertainty quantification, \cite{WangWang2023} for intractable distributions and \cite{FujisawaSato2021} for variational inference frameworks.

In order to fix ideas here, we will focus on the applications of MLMC described in Giles' original papers \cite{Giles2008,Giles2008a}, i.e.~to option valuation for assets governed by one-dimensional stochastic differential equations (SDEs).
To that end, suppose that an asset price $S_t$ is governed by a one-dimensional It\^o diffusion 
\begin{equation}
    dS_t=a(S_t)dt+b(S_t)dW_t, \hspace{10pt} 0 \leq t \leq T,
    \label{eq1}
\end{equation}
where $W_t$ is a standard Brownian motion under the pricing measure, $a$ and $b$ are given functions satisfying certain conditions, and the initial value $S_0$ is assumed to be given. We may wish to compute the value of a European (possibly path-dependent option with payoff $\Lambda(S_{[0,T]})$, where $S_{[0,T]}$ denotes the path followed by $S_t$ over the interval $[0,T]$, given a realisation of the Brownian motion $W_t$ on that interval. For example, a call option with strike price $K$ will have $\Lambda(S_{[0,T]}) = \max(S_T-K,0)$, while for an Asian put option we would have $\Lambda(S_{[0,T]}) = \max\left(\frac{1}{T}\int_0^TS_t dt -K,0\right)$.
The option value will be given by a discounted expected payoff (we assume a constant interest rate $r$), so that in this setting $P = e^{-rT}\Lambda(S_{[0,T]})$.

Approximate samples $P_l$ ($l=0,\dots,L$) for $P$ might arise from simple Euler-Maruyama discretisations of \eqref{eq1}, with $P_l$ using $J_l$ steps and a constant time step $h_l=T/J_l$ (other discretisations, such as the Milstein discretisation, c.f.~\cite{Giles2008a}, can also be used). We define $t^l_j = jh_l, \; j=0,\ldots,J_l$ and, given the initial value $S^l_0=S_0$, define the values of the discrete path $S^l_{[0,J_l]} = (S^l_0,\ldots,S^l_{J_l})$ via
\begin{equation}
\label{eq:EM}
    S^l_{j+1}=S^l_j+a(S^l_j)h+b(S^l_j)\Delta W^l_j, \hspace{10pt} j=0,1...,J_l-1.
\end{equation}
where $\Delta W^l_j=W_{t^l_{j+1}}-W_{t^l_j}$. For such a sequence, we can define a discrete payoff function $\Lambda^l$ that should correspond in an appropriate way to $\Lambda$. (In the case of a call option, $\Lambda^l(\widehat{S}_{[0,J]}) = \max(S^l_J-K,0)$.) Then, if we have a sequence of grids with $J_l$ steps, $l=0,1,\dots$, we can define $P_l = e^{-rT}\Lambda^l(S^l_{[0,J_l]})$.

A single-level Monte Carlo estimate for $Y_l$, using $N_l$ independent sample paths, will be formed as the average of the values of $P_l$ resulting from those paths, and we denote such an average 
$\displaystyle{\ps{N_l}{P_l}}$.

Then $\E[\ps{N_l}{P_l}] = \overline{P}_l$, and the variance of $\ps{N_l}{P_l}$ is
\begin{equation*}
    \V[\ps{N_l}{P_l}]=\frac{1}{N_l}\mathbb{V}[P_l].
\end{equation*}
The mean square error (MSE) associated with our approximation is then
\begin{equation}
\label{eq:MSE}
  \E\left[\left({_{N_l}}P_l-\overline{P}\right)^2\right] =\frac{1}{N_l}\V[P_l]+(\overline{P}_l-\overline{P})^2.
\end{equation}
It is well established (c.f.~\cite{BallyTalay1995}, for instance) that, provided $a$, $b$ and the payoff function $\Lambda$ satisfy certain conditions, the Euler-Maruyama discretisation of \eqref{eq1} will have a weak order of convergence of 1, so that $\E[P-P_l]^2=O(J_l^{-2})$. Given that $\V[{_{N_l}}P_l] = O(N_l^{-1})$, the MSE will be $O(N_l^{-1})+O(J_l^{-2})$.
The total computational cost will be proportional to $N_l J_l$, and it is straightforward to show that, to achieve an MSE of $\epsilon^2$ while minimising the computational effort, we should have $N_l$ proportional to $\epsilon^{-2}$ and $J_l$ proportional to $\epsilon^{-1}$, giving a total computational cost proportional to $\epsilon^{-3}$.

Some early steps  toward reducing this computational complexity involved the use of approximations at two levels, with the coarser level approximation serving as a control variate. This was the approach taken by Kebaier \cite{Kebaier2005}, who proved that the the total cost can be reduced to $O(\epsilon^{-2.5})$ through an appropriate combination of two estimators with time steps $h$ and $\sqrt{h}$. His work was an application of a more generally-applicable approach of quasi-control variates analysed by Emsermann and Simon \cite{Emsermann2002}. Giles' Multilevel Monte Carlo approach can in some cases reduce this cost to $O(\epsilon^{-2})$.

\subsection{Multilevel Monte Carlo}
The multilevel approach proposed by Giles \cite{Giles2008} involves forming a geometric sequence of grids using $J_l = J_0 M^l$ time steps, for some integer $M$ (typically $M=2$ or $M=4$), and for $l=0,1,2,\dots,L$. These grids are then used to construct \emph{basic estimators} $Y_0 = P_0$, and, for $l>0$, $Y_l = P_l - P^l_{l-1}$, where $P^l_{l-1}$ is an estimate constructed on a grid with timestep $h_{l-1}$, but using the \emph{same random samples} $\Delta W^l_j$ as were used to compute $P_l$. This can be done, for instance, by summing successive samples, so that, if $M=2$, we can set $\Delta W^{l-1}_j = \Delta W^l_{2j}+\Delta W^l_{2j+1}$, for $j = 0,\dots,J_{l-1}-1$ and use these to compute a discrete path with timestep $h_{l-1}$, which can then be used to compute $P^l_{l-1}$.

In this way, we find that $\E[Y_l] = \overline{P}_l - \overline{P}_{l-1}$, for $l>0$, and this means that a telescoping sum of Monte Carlo samples of the basic estimators, using $N^L_l$ samples for $Y_l$, i.e. 
\begin{equation}
\label{eq:MLMCP} 
{\cal P}_L = \ps{N^L_0}{Y_0} + \ps{N^L_1}{Y_1} + \dots + \ps{N^L_L}{Y_L}
\end{equation}
satisfies $\E[{\cal P}_L] = \overline{P}_L$. 

If we denote the computational effort needed to compute a single sample of $Y_l$ by $\eta_l^2$, and the standard deviation of $Y_l$ by $\Delta_l$, then the computational effort required to compute  ${\cal P}_L$ is 
\[ E_L^2 = \sum_{l=0}^LN^L_l\eta_l^2\]
and the variance of ${\cal P}_L$ is given by
\[ \V[{\cal P}_L] = \sum_{l=0}^L\frac{\Delta_l^2}{N^L_l}.\]
The minimal (square root) cost $E_L$ needed to achieve a variance of $v^2$ is
\begin{equation}
\label{eq:MLMCE}
E_L = \frac{1}{v}\sum_{l=0}^L\Delta_l\eta_l,
\end{equation}
and this is achieved by setting 
\begin{equation}
\label{eq:MLMCN} N^L_l = \frac{E_L\Delta_l}{\eta_lv}, \quad l=0,1,\dots,L.
\end{equation}
(This can be readily derived using a Lagrange multiplier approach.)
\subsubsection{Complexity Theorem}
The potential savings of the approach, and the circumstances under which they may be realised, are shown in the following theorem (c.f.~Theorem~2.1 of \cite{Giles2015}).
\begin{theorem}
\label{thm:MLMCComplexity}
Suppose that there are positive constants $c_1, c_2, c_3, \alpha,\beta,\gamma$ with $2\alpha\geq\min(\beta,\gamma)$ such that, for $l=0,1,2,\dots$,
\begin{enumerate}
\item $\abs{\overline{P}_l-\overline{P}}\leq c_1 2^{-\alpha l} $,
\item $\Delta^2_l \leq c_2 2^{-\beta l}$, and
\item $\eta^2_l \leq c_3 2^{\gamma l}$.
\end{enumerate}
Then there exists a positive constant $c_4$ such that, for any $\epsilon<1/e$ there exists an integer $L$ such that ${\cal P}_L$ (as defined by \eqref{eq:MLMCP}--\eqref{eq:MLMCN}) has mean squared error with bound
\[ \E[({\cal P}_L-\overline{P})^2] < \epsilon^2\]
with a computational complexity $C$ with bound
\[ C \leq \begin{cases}
c_4\epsilon^{-2} & \beta>\gamma,\\
c_4\epsilon^{-2}(\log\epsilon)^2 & \beta=\gamma,\\
c_4\epsilon^{-2-\frac{\gamma-\beta}{\alpha}} & \beta<\gamma.
\end{cases}
\]
\end{theorem}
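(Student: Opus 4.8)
The plan is to follow the standard bias--variance split for the mean squared error and then optimise the cost via the allocation already recorded in \eqref{eq:MLMCN}. Since $\E[{\cal P}_L]=\overline{P}_L$, the MSE decomposes exactly as in \eqref{eq:MSE}:
\[ \E[({\cal P}_L-\overline{P})^2] = \V[{\cal P}_L] + (\overline{P}_L-\overline{P})^2. \]
First I would dispose of the bias term using hypothesis~1: I would take $L$ to be the smallest integer for which $c_1 2^{-\alpha L}\le \epsilon/\sqrt 2$, so that the squared bias is at most $\epsilon^2/2$. Minimality gives the matching lower bound $c_1 2^{-\alpha(L-1)}>\epsilon/\sqrt2$, and hence the two-sided estimate $2^{-\alpha L}=\Theta(\epsilon)$, equivalently $2^{L}=\Theta(\epsilon^{-1/\alpha})$ and $L=\Theta(\log(1/\epsilon))$. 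This two-sided control is exactly what I will need later when substituting for $2^{\gamma L}$ and $2^{(\gamma-\beta)L}$.

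With $L$ fixed I would assign a budget of $v^2=\epsilon^2/2$ to the variance, so that the two halves together give MSE $<\epsilon^2$. Using the optimal sample numbers \eqref{eq:MLMCN}, the cost is $C=E_L^2=v^{-2}\bigl(\sum_{l=0}^L\Delta_l\eta_l\bigr)^2$ by \eqref{eq:MLMCE}. Hypotheses~2 and~3 give $\Delta_l\eta_l\le\sqrt{c_2c_3}\,2^{(\gamma-\beta)l/2}$, so everything reduces to estimating the geometric sum $\sum_{l=0}^L 2^{(\gamma-\beta)l/2}$. I would then split into the three regimes: when $\beta>\gamma$ the ratio is below $1$ and the sum is bounded by a constant uniformly in $L$, giving $C=O(v^{-2})=O(\epsilon^{-2})$; when $\beta=\gamma$ the summand is constant and the sum is $\Theta(L)=\Theta(\log(1/\epsilon))$, giving $C=O(\epsilon^{-2}(\log\epsilon)^2)$; and when $\beta<\gamma$ the sum is dominated by its last term, $\Theta(2^{(\gamma-\beta)L/2})$, so that substituting $2^{L}=\Theta(\epsilon^{-1/\alpha})$ yields $C=O\bigl(\epsilon^{-2}\,2^{(\gamma-\beta)L}\bigr)=O(\epsilon^{-2-(\gamma-\beta)/\alpha})$. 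These are precisely the three stated bounds.

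The hard part will be the place where integrality must be respected. The prescription \eqref{eq:MLMCN} produces real-valued $N^L_l$, so each must be replaced by $\lceil N^L_l\rceil$; this only lowers the variance below $v^2$ but inflates the cost by at most $\sum_{l=0}^L\eta_l^2\le c_3\sum_{l=0}^L 2^{\gamma l}=O(2^{\gamma L})=O(\epsilon^{-\gamma/\alpha})$. This is where the hypothesis $2\alpha\ge\min(\beta,\gamma)$ is consumed: a short check in each regime shows that $\epsilon^{-\gamma/\alpha}$ is dominated by the corresponding main bound precisely when $\min(\beta,\gamma)\le 2\alpha$ (in the regime $\beta>\gamma$ this requires $\gamma\le 2\alpha$, while in the regime $\beta<\gamma$ it requires $\beta\le 2\alpha$). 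I would finish by collecting the constants from the three cases into a single $c_4$, noting that the restriction $\epsilon<1/e$ forces $L\ge 1$ and makes $(\log\epsilon)^2>1$, so that the borderline bound genuinely dominates the rounding contribution $\epsilon^{-\beta/\alpha}$ there as well.
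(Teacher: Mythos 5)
Your argument is correct and complete. Note that the paper does not actually prove Theorem~\ref{thm:MLMCComplexity}: it imports the statement from Giles (2015) and only sketches the ideal case $\gamma<\beta\leq 2\alpha$, showing that $E_L$ is bounded by a geometric series uniformly in $L$. What you have written is essentially the standard proof from that reference: the exact bias--variance split of \eqref{eq:MSE}, the minimal choice of $L$ giving the two-sided bound $2^{\alpha L}=\Theta(\epsilon^{-1})$, the reduction of $E_L^2$ to the geometric sum $\sum_l 2^{(\gamma-\beta)l/2}$ in three regimes, and --- the only place where real care is needed --- the rounding of $N^L_l$ to integers, whose cost overhead $O(\epsilon^{-\gamma/\alpha})$ you correctly show is absorbed by the main term in each regime precisely because $2\alpha\geq\min(\beta,\gamma)$. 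Your case-by-case check of which half of that hypothesis is consumed where ($\gamma\leq 2\alpha$ when $\beta>\gamma$, $\beta\leq 2\alpha$ when $\beta<\gamma$, and the $(\log\epsilon)^2>1$ observation on the borderline) is exactly right, so there is nothing to add.
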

The conditions of the theorem constrain the rate of decay of the bias of $P_l$ and the standard deviation of the basic estimators $Y_l$ and the rate of growth in the cost of sampling the estimators. In the `ideal' case, where $\gamma<\beta\leq 2\alpha$, we see that, for a given $L$,
\[ E_L \leq \frac{c_2c_3}{v}\sum_{l=0}^L2^{-(\beta-\gamma)l/2} \leq \frac{c_2c_3}{v}\frac{1}{1-2^{-(\beta-\gamma)/2}}.\]
Thus the cost is proportional to $1/v^2$ for any $L$. We can thus set\footnote{Note that it is possible to optimise over the proportion of $\epsilon^2$ that is split between the error sources.} $v=\epsilon/\sqrt{2}$ and choose $L$ so that $\abs{\overline{P}_l-\overline{P}}\leq\epsilon/\sqrt{2}$ to satisfy the conditions of the theorem and verify its conclusions in this case. 

\subsubsection{Performance at the coarsest levels}
While the asymptotic complexity of the MLMC method is an improvement over that of single-level Monte Carlo (in almost all cases), in practice, it may be optimal to reduce the number of coarsest levels used.  In particular, unless the correlation between $P_1$ and $P^1_0$ is sufficiently high, it will be better to set level $l=1$ to be the coarsest level. 

If we denote the correlation between $P_l$ and $P^l_{l-1}$ by $\rho_l$, and the standard deviation of $P_l$ by $\sigma_l$, then we have, with $l=0$ the coarsest level, 
\[ \Delta_0 = \sigma_0 \word{and} \Delta^2_1 = \sigma_1^2-2\rho_1\sigma_1\sigma_0+\sigma_0^2.\]
If $l=1$ is the coarsest level, then $\Delta_1 = \sigma_1$. The total effort for an MLMC estimate starting from level $l=1$ will be lower than that for an estimate starting from $l=0$ unless
\[ \eta_0\sigma_0+ \eta_1\sqrt{\sigma_1^2-2\rho_1\sigma_1\sigma_0+\sigma_0^2}\leq \eta_1\sigma_1.\]
Moving the $\eta_0\sigma_0$ term to the right hand side, and squaring, we find that the condition becomes
\[ \rho_1\geq \frac{\sigma_0}{\sigma_1}\frac{2^{\gamma}-1}{2^{\gamma+1}} + \frac{1}{2^{\frac{\gamma}{2}}},\]
where we have assumed a) that the cost of computing a sample of $Y_1=P_1-P^1_0$ is similar to the cost of computing $P_1$, and b) that $\eta_1/\eta_0 = 2^{\frac{\gamma}{2}}$. 

As an example, if $\sigma_0=\sigma_1$ and $\gamma = 1$, then the correlation needs to be above roughly $0.957$ in order for the level $0$ samples to add any benefit. Moreover, if in fact  a sample of $Y_1=P_1-P^1_0$ is more expensive to compute than a sample of $P_1$, this lower bound will be higher.

\section{Weighted Multilevel Monte Carlo (WMLMC)}
Here we propose a generalisation of the standard MLMC method that involves rewriting the MLMC estimators as nested control variate estimators, and (in that spirit) adding weights. The impact of this is to enable more efficient use to be made of the coarsest level samples, especially when the correlation between the levels is relatively low,  with potentially significant cost savings (although with no impact on the asymptotic order of complexity).

Recent work by Amri et~al.~\cite{AmriMycekRicci2023} also takes a multilevel control variate point of view, although their approach differs from the method proposed here by using extrinsic surrogate estimators as control variates.

A weighted version of MLMC was also proposed in \cite{LemairePages2017}, where the possibility of adding weights in order to better control the bias was introduced. There a \emph{design matrix} ${\bf T}$ was used to define a general multilevel estimator. Several templates for ${\bf T}$ were discussed (c.f.~Section~3.32 and Section~5.1). However, a defining property of these design matrices was that their columns (apart from the first) should all sum to zero. The weighting scheme we propose here breaks that property; its goal is to improve the variance, rather than the bias.

\subsection{A recursive formulation of MLMC}
We start by showing that the MLMC estimators can be written recursively, which is possible because the ratio between successive numbers of samples $N^L_l$ is independent of the finest level $L$. We define, for $l=0,\dots,L$, $E_l=\frac{1}{v}\sum_{l'=0}^l\Delta_{l'}\eta_{l'}$ (c.f.~\eqref{eq:MLMCE}), and set 
\begin{equation}
\label{eq:MLMCalphabeta}
\alpha_l:= \frac{E_l\Delta_l}{\eta_l v}, \; l=0,\dots,L \word{and}\beta_l:=  
\frac{E_l}{E_{l-1}},\; l=1,\dots,L.
\end{equation}
(Note that $E_0 = \sigma_0\eta_0/v$, so that $\alpha_0 = \sigma_0^2/v^2$.)
Then we can recursively define the multilevel estimators
\begin{equation}
\label{eq:MLMCRecursive}
\begin{aligned}
{\cal P}_0 &= \ps{\alpha_0}{Y_0}, \\
{\cal P}_l &= \ps{\alpha_l}{Y_l}+\ps{\beta_l\,}{{\cal P}_{l-1}},\quad l=1,\dots,L.
\end{aligned}
\end{equation}
The fact that we use the same notation for these multilevel estimators as for the MLMC estimators defined in \eqref{eq:MLMCP}--\eqref{eq:MLMCN} is justified by the following lemma.
\begin{lemma}
Let $L\in\N_0$. The multilevel estimators defined in \eqref{eq:MLMCRecursive}, with $\alpha_l$ and $\beta_l$ given by \eqref{eq:MLMCalphabeta}, satisfy, for $l=0,\dots,L$,
\begin{align}
\label{eq:MLMClem1} \V[ {\cal P}_l ] &= v^2\\
\label{eq:MLMClem2} \E[{\cal P}_l] &= \overline{P}_l\\
\label{eq:MLMClem3}
{\cal P}_l &= \sum_{l'=0}^l \ps{N^l_{l'}\,}{Y_{l'}},
\end{align}
where $N^l_{l'} = \alpha_{l'}\beta_{l'+1}\dots\beta_{l}$. 
In particular, ${\cal P}_L$ satisfies \eqref{eq:MLMCP}, with $N^L_l$ given by \eqref{eq:MLMCN}.
\end{lemma}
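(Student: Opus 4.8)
The plan is to prove \eqref{eq:MLMClem1}--\eqref{eq:MLMClem3} simultaneously by induction on $l$, since the inductive step for each identity draws on the others. The base case $l=0$ is immediate: $\mathcal{P}_0 = \ps{\alpha_0}{Y_0}$ is an average of $\alpha_0$ independent samples of $Y_0$, so \eqref{eq:MLMClem3} holds with $N^0_0 = \alpha_0$ (an empty product of $\beta$'s); its expectation is $\E[Y_0] = \overline{P}_0$, giving \eqref{eq:MLMClem2}; and since $\Delta_0 = \sigma_0$ and $\alpha_0 = \sigma_0^2/v^2$, its variance is $\Delta_0^2/\alpha_0 = v^2$, giving \eqref{eq:MLMClem1}.

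For the inductive step I would assume all three identities at level $l-1$ and treat them in turn at level $l$. The expectation is easiest: averaging preserves expectations, so $\E[\mathcal{P}_l] = \E[Y_l] + \E[\mathcal{P}_{l-1}] = (\overline{P}_l - \overline{P}_{l-1}) + \overline{P}_{l-1} = \overline{P}_l$, using the telescoping property $\E[Y_l]=\overline{P}_l-\overline{P}_{l-1}$ of the basic estimators. For the variance I would use that the samples generating $\ps{\alpha_l}{Y_l}$ are independent of those generating $\ps{\beta_l}{\mathcal{P}_{l-1}}$, so that
\[
\V[\mathcal{P}_l] = \frac{\Delta_l^2}{\alpha_l} + \frac{\V[\mathcal{P}_{l-1}]}{\beta_l} = \frac{\Delta_l\eta_l v}{E_l} + \frac{v^2 E_{l-1}}{E_l},
\]
after substituting the definitions of $\alpha_l$, $\beta_l$ and the inductive hypothesis $\V[\mathcal{P}_{l-1}] = v^2$. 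The crux here is the elementary identity $vE_l = vE_{l-1} + \Delta_l\eta_l$, read directly off $E_l = \frac{1}{v}\sum_{l'=0}^l \Delta_{l'}\eta_{l'}$; it collapses $\Delta_l\eta_l + vE_{l-1}$ to $vE_l$ and yields $\V[\mathcal{P}_l] = v^2$.

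The sample-count identity \eqref{eq:MLMClem3} hinges on reading $\ps{\beta_l}{\mathcal{P}_{l-1}}$ correctly. The key step is that forming an average of $\beta_l$ independent realisations of the nested estimator $\mathcal{P}_{l-1} = \sum_{l'=0}^{l-1}\ps{N^{l-1}_{l'}}{Y_{l'}}$ amounts, level by level, to replacing each inner count $N^{l-1}_{l'}$ by $\beta_l N^{l-1}_{l'}$, because an average of $\beta_l$ independent averages of $N^{l-1}_{l'}$ i.i.d.\ samples of $Y_{l'}$ is itself an average of $\beta_l N^{l-1}_{l'}$ such samples. Adjoining the fresh term $\ps{\alpha_l}{Y_l}$ then gives $N^l_l = \alpha_l$ and $N^l_{l'} = \beta_l N^{l-1}_{l'} = \alpha_{l'}\beta_{l'+1}\cdots\beta_l$ for $l'<l$, completing the induction. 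The final assertion follows by telescoping the $\beta$'s: $N^L_l = \alpha_l\beta_{l+1}\cdots\beta_L = \frac{E_l\Delta_l}{\eta_l v}\cdot\frac{E_L}{E_l} = \frac{E_L\Delta_l}{\eta_l v}$, which is exactly \eqref{eq:MLMCN}.

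I expect the main obstacle to be making the collapse of nested averages in \eqref{eq:MLMClem3} rigorous: one must be explicit that the $\beta_l$ copies of $\mathcal{P}_{l-1}$ draw on mutually independent sample families, disjoint from the level-$l$ samples, so that the pooled collection at each level $l'$ genuinely is i.i.d.\ and the additivity of variances used above is legitimate. A secondary, standard caveat is that $\alpha_l$ and $\beta_l$ need not be integers; as elsewhere in the MLMC literature the algebra above is treated as exact and the counts are rounded up in implementation, which only reduces the variance.
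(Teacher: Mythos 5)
Your proof is correct and takes essentially the same route as the paper's: both amount to expanding the recursion to get \eqref{eq:MLMClem3}, telescoping the means of the $Y_l$ for \eqref{eq:MLMClem2}, and exploiting $E_l=\frac{1}{v}\sum_{l'\leq l}\Delta_{l'}\eta_{l'}$ together with the telescoping product $\alpha_{l'}\beta_{l'+1}\cdots\beta_l=E_l\Delta_{l'}/(\eta_{l'}v)$ for the variance and the final assertion. The only cosmetic difference is that you check the variance via the one-step identity $vE_l=vE_{l-1}+\Delta_l\eta_l$ inside the induction, whereas the paper first derives the closed form $N^l_{l'}=E_l\Delta_{l'}/(\eta_{l'}v)$ and sums $\sum_{l'}\Delta_{l'}^2/N^l_{l'}$; these are equivalent, and your explicit treatment of the independence of sample families and the non-integrality of $\alpha_l,\beta_l$ is a welcome (if standard) addition.
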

\begin{proof}
The equality \eqref{eq:MLMClem3} follows from expanding \eqref{eq:MLMCRecursive}. Then \eqref{eq:MLMClem2} follows via the telescoping sum of means of $Y_l$. Noting that
\[ N^l_{l'} = \alpha_{l'}\beta_{l'+1}\dots\beta_{l} = \frac{E_{l'}\Delta_{l'}}{\eta_{l'}v}\frac{E_{l'+1}}{E_{l'}}\dots\frac{E_l}{E_{l-1}} = \frac{E_l\Delta_{l'}}{\eta_{l'}v},
\]
we see that \eqref{eq:MLMClem2} follows. At this point, the fact that ${\cal P}_L$ as defined via \eqref{eq:MLMCRecursive} satisfies \eqref{eq:MLMCP} follows immediately.
\end{proof}
\subsubsection{MLMC and control variates}
The final step of the recursion \eqref{eq:MLMCRecursive} can be expanded to give
\begin{equation}
\label{eq:MCMCCV}
{\cal P}_L = \ps{\alpha_L}{Y_L}+\ps{\beta_L}{{\cal P}_{L-1}} = \ps{\alpha_L}{P_L} - \left(\ps{\alpha_L}{P^L_{L-1}}-\ps{\beta_L}{{\cal P}_{L-1}}\right).
\end{equation}
The term in brackets has mean zero and can be seen to act as a control variate, correlated with the finest level single-level estimator $\ps{\alpha_L}{P_L}$. This can be continued at each level, so that the MLMC estimator can be seen as a nested sequence of control variates, each employed to reduce the variance of the finer level estimator.

At this point, we are naturally lead to consider adding weights to improve the efficiency of these control variates.
\subsection{Adding weights}
Given a set of weights $\theta = (\theta_0,\dots,\theta_L)$, with $\theta_0=0$, we define the estimators ${\cal P}^\theta_l$ for our \emph{weighted} multilevel Monte Carlo method (WMLMC) recursively.
\begin{equation}
\label{eq:MLMCW}
\begin{aligned}
{\cal P}^\theta_0 &= \ps{\alpha_0}{Y^\theta_0}=\ps{\alpha_0}{P_0}, \\
{\cal P}^\theta_l &= \ps{\alpha_l}{P_l} - \theta_l\left(\ps{\alpha_l}{P^l_{l-1}}-\ps{\beta_l}{{\cal P}^\theta_{l-1}}\right)= \ps{\alpha_l}{Y^\theta_l}+\theta_l\,\ps{\beta_l\,}{{\cal P}^\theta_{l-1}},\quad l=1,\dots,L,
\end{aligned}
\end{equation}
so that, here $Y^\theta_l = P_l-\theta_lP^l_{l-1}$ for $l>0$, and the values of $\alpha_l$ and $\beta_l$, along with those of $\theta_l$, will be determined so as to minimise the cost (for a given variance of $v^2$) of computing each ${\cal P}^\theta_l$. 

We see that we can write 
\begin{equation}
\label{eq:MLMCWexp}
{\cal P}^\theta_l = \sum_{l'=0}^l\Theta^l_{l'}\; \ps{N^l_{l'}\,}{Y^\theta_{l'}},\quad l=0,\dots,L
\end{equation}
with, as before $N^l_{l'} = \alpha_{l'}\beta_{l'+1}\dots\beta_{l}$, $0\leq l'\leq l\leq L$ and
\begin{equation}
\label{eq:MLMCWTheta}
\Theta^l_l = 1, 0\leq l\leq L, \word{and} 
\Theta^l_{l'} = \prod_{k=l'+1}^l\theta_k,\quad 0\leq l'< l\leq L.
\end{equation}
If we try to optimise the values of $\Theta^L_{l}$ and $N^L_l$ directly, we find that they satisfy a coupled system of nonlinear equations. It is possible (\cite{WareLi2021}) to solve these equations using a sweeping approach reminiscent of the Thomas algorithm for tridiagonal systems. However, it is more natural to use the recursive equations \eqref{eq:MLMCW}, and the resulting optimal values are provided in the Proposition~\ref{prop:WMLMC} below.

We denote the cost of generating a sample of ${\cal P}^\theta_l$ by $E^\theta_l$, and the standard deviation of $Y^\theta_l$ by $\Delta^\theta_l$, so that
\begin{align}
  \label{eq:Etheta}
(E^\theta_l)^2 &= \alpha_l\eta_l^2 + \beta_l\widetilde{E}_{l-1}^2,\word{and}\\
  \label{eq:Deltatheta}
(\Delta^\theta_l)^2 &= \sigma_l^2 - 2\theta_l\rho_l\sigma_{l-1}\sigma_l + \theta_l^2\sigma_{l-1}^2.
\end{align}
We will seek to determine \emph{optimal} weights $\widetilde{\theta}_l$ that minimise the cost of computing ${\cal P}^\theta_l$, for a given variance. We denote the resulting  estimators by $\widetilde{\cal P}_l$ and $\widetilde{Y}_l$, and the corresponding values of $E^\theta_l$ and $\Delta^\theta_l$ by $\widetilde{E}_l$ and $\widetilde{\Delta}_l$, respectively.
\begin{proposition}
\label{prop:WMLMC}
The  values for the weights $\widetilde{\theta}_l$ and the effort parameters $\alpha_l$ and $\beta_l$ that generate estimators $\widetilde{\cal P}_l$ with variance $v^2$ for the minimal cost satisfy the following equations, which may be solved recursively:
\[ \widetilde{\Delta}_0 = \sigma_0,\; \theta_0 = 0,\; \widetilde{E}_0 = \frac{\sigma_0\eta_0}{v},\; \alpha_0 = \frac{\sigma_0^2}{v^2},\; \beta_0 = 0,\]
and, for $l>0$, if\; $\abs{\rho_l}>\frac{v\widetilde{E}_{l-1}}{\sigma_{l-1}\eta_l}$, 
\begin{align}
\label{eq:WMLMCthm1}\widetilde{\Delta}_l &= \frac{\sigma_l\sqrt{1-\rho^2_l}}{\sqrt{1-\frac{v^2\widetilde{E}^2_{l-1}}{\sigma^2_{l-1}\eta^2_l}}} \\
\label{eq:WMLMCthm2}   \widetilde{\theta}_l & = 
\frac{\rho_l\sigma_l}{\sigma_{l-1}}-\sgn\rho_l\; \frac{\widetilde{\Delta}_lv\widetilde{E}_{l-1}}{\sigma^2_{l-1}\eta_l}\\
\label{eq:WMLMCthm3}\widetilde{E}_l &= \frac{1}{v}\left(\widetilde{\Delta}_l\eta_l + \abs{\widetilde{\theta}_l}\widetilde{E}_{l-1}v\right)  \\
\label{eq:WMLMCthm4}\alpha_l &= \frac{\widetilde{E}_l\widetilde{\Delta}_l}{\eta_lv}, \qquad
\beta_l = \frac{\widetilde{E}_l\abs{\widetilde{\theta}_l}}{\widetilde{E}_{l-1}},
\end{align}
and otherwise 
\begin{equation} 
\widetilde{\Delta}_l = \sigma_l,\; \widetilde{\theta}_l = 0,\; \widetilde{E}_l = \frac{\sigma_l\eta_l}{v},\; \alpha_l = \frac{\sigma_l^2}{v^2},\; \beta_l = 0.
\end{equation}
\end{proposition}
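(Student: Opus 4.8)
The plan is to argue by induction on $l$, at each level solving the constrained problem of minimising the cost $(E^\theta_l)^2$ of \eqref{eq:Etheta} subject to the variance constraint $\V[{\cal P}^\theta_l]=v^2$, over the three unknowns $\alpha_l$, $\beta_l$, $\theta_l$. The base case $l=0$ is immediate: ${\cal P}^\theta_0 = \ps{\alpha_0}{P_0}$ is an ordinary single-level estimator, so the constraint forces $\alpha_0=\sigma_0^2/v^2$, and the stated values of $\widetilde\Delta_0$ and $\widetilde{E}_0$ follow directly. For the inductive step I would first make the constraint explicit. Since the level-$l$ increment samples defining $\ps{\alpha_l}{Y^\theta_l}$ are drawn independently of the $\beta_l$ copies of ${\cal P}^\theta_{l-1}$ appearing in \eqref{eq:MLMCW}, and each such copy has variance $v^2$ by the inductive hypothesis, the variance splits as
\[ \V[{\cal P}^\theta_l] = \frac{(\Delta^\theta_l)^2}{\alpha_l} + \frac{\theta_l^2 v^2}{\beta_l}, \]
with $(\Delta^\theta_l)^2$ given by \eqref{eq:Deltatheta}. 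Setting this equal to $v^2$ is the constraint to be paired with the cost \eqref{eq:Etheta}. (I would also note in passing that the bracketed control variate in \eqref{eq:MLMCW} has mean zero, so unbiasedness is preserved for any $\theta_l$; the problem is genuinely one of variance and cost only.)

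The optimisation is then carried out in two nested stages. For fixed $\theta_l$, minimising $\alpha_l\eta_l^2 + \beta_l\widetilde{E}_{l-1}^2$ subject to $(\Delta^\theta_l)^2/\alpha_l + \theta_l^2 v^2/\beta_l = v^2$ is a two-variable problem solved by Cauchy--Schwarz (equivalently a Lagrange multiplier), giving minimum value
\[ (E^\theta_l)^2 = \frac{1}{v^2}\left(\Delta^\theta_l\eta_l + \abs{\theta_l}v\widetilde{E}_{l-1}\right)^2, \]
attained at exactly the $\alpha_l,\beta_l$ recorded in \eqref{eq:WMLMCthm4}. This establishes \eqref{eq:WMLMCthm3} and reduces the task to minimising the scalar function $E^\theta_l=\frac1v(\Delta^\theta_l\eta_l+\abs{\theta_l}v\widetilde{E}_{l-1})$ over $\theta_l$.

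For the outer one-dimensional minimisation I would differentiate on the region where $\sgn\theta_l=\sgn\rho_l$; setting the derivative to zero yields the stationarity relation $\theta_l\sigma_{l-1}^2 - \rho_l\sigma_{l-1}\sigma_l = -\sgn(\rho_l)\,(v\widetilde{E}_{l-1}/\eta_l)\,\Delta^\theta_l$, which rearranges immediately to \eqref{eq:WMLMCthm2}. Substituting this expression back into \eqref{eq:Deltatheta} and simplifying produces a pleasant cancellation of the cross terms, leaving a linear equation for $(\Delta^\theta_l)^2$ whose solution is precisely the closed form \eqref{eq:WMLMCthm1}. Feeding $\widetilde\Delta_l$ and $\widetilde\theta_l$ back through \eqref{eq:WMLMCthm3}--\eqref{eq:WMLMCthm4} then fixes all remaining quantities and closes the induction.

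The main obstacle, and the part demanding care, is the non-smoothness introduced by the $\abs{\theta_l}$ term at $\theta_l=0$, together with correctly deciding when the interior critical point is the true minimiser. I would resolve this by examining the one-sided derivatives of $E^\theta_l$ at $\theta_l=0$: the objective decreases away from the origin in the direction $\sgn\rho_l$ exactly when $\eta_l\abs{\rho_l}\sigma_{l-1}>v\widetilde{E}_{l-1}$, i.e.\ when $\abs{\rho_l}>v\widetilde{E}_{l-1}/(\sigma_{l-1}\eta_l)$, in which case the interior stationary point is a genuine minimum with $\sgn\widetilde\theta_l=\sgn\rho_l$. When the inequality fails, the objective is increasing away from $0$ and the optimum is the boundary value $\widetilde\theta_l=0$, which reduces level $l$ to a fresh coarsest level and yields the ``otherwise'' branch. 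Finally I would check that this same threshold guarantees the radicand $1-v^2\widetilde{E}_{l-1}^2/(\sigma_{l-1}^2\eta_l^2)$ in \eqref{eq:WMLMCthm1} is positive (it is, since $\abs{\rho_l}\le1$) and that the critical point is a minimiser rather than a maximiser, so that all cases are covered consistently.
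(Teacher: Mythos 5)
Your proposal is correct and follows essentially the same route as the paper's proof: the same two-stage optimisation (Lagrange multiplier over $\alpha_l,\beta_l$ for fixed $\theta_l$, then a scalar minimisation over $\theta_l$), the same one-sided derivative analysis at $\theta_l=0$ to identify the threshold $\abs{\rho_l}>v\widetilde{E}_{l-1}/(\sigma_{l-1}\eta_l)$, and the same stationarity relation yielding \eqref{eq:WMLMCthm1}--\eqref{eq:WMLMCthm2}. The only cosmetic difference is the order in which the two closed forms are extracted from the stationarity equation.
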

\begin{proof}
We start by supposing that we have an optimised estimator $\widetilde{\cal P}_{l-1}$ at level $l-1$ with variance $v^2$ and computational cost $\widetilde{E}_{l-1}^2$. From \eqref{eq:MLMCW}, for any given value of $\theta_l$ we have
\[ {\cal P}^\theta_l = \ps{\alpha_l}{Y^\theta_l}+\theta_l\,\ps{\beta_l\,}{\widetilde{\cal P}_{l-1}}.\]
The cost of computing ${\cal P}^\theta_l$ is given by \eqref{eq:Etheta},
and the variance of ${\cal P}^\theta_l$ is given by
\[ \V[{\cal P}^\theta_l] = \frac{(\Delta_l^\theta)^2}{\alpha_l} + \frac{v^2\theta_l^2}{\beta_l}.\]
Minimising $(E^\theta_l)^2$ subject to $\V[{\cal P}^\theta_l]=v^2$ (by means of a Lagrange multiplier, for example) gives us provisional versions of \eqref{eq:WMLMCthm3} and \eqref{eq:WMLMCthm4}:
\begin{equation}
\label{eq:EthetaOpt}
E^\theta_l = \frac{1}{v}\left(\Delta^\theta_l\eta_l + \abs{\theta}_l\widetilde{E}_{l-1}v\right),
\word{with}
\alpha_l = \frac{E^\theta_l\Delta^\theta_l}{\eta_lv}, \word{and}
\beta_l = \frac{E^\theta_l\abs{\theta_l}}{\widetilde{E}_{l-1}}
.
\end{equation}
It remains to determine the value $\widetilde\theta_l$ that minimises $E^\theta_l$ in \eqref{eq:EthetaOpt}. If $\theta_l\ne 0$, we can differentiate the expression for $E^\theta_l$ in \eqref{eq:EthetaOpt}, using \eqref{eq:Deltatheta}, to obtain
\[ \frac{d E^\theta_l}{d \theta_l} = \frac{\eta_l\sigma_{l-1}(\theta_l\sigma_{l-1}-\rho_l\sigma_l)}{v\Delta^\theta_l} + \sgn\theta_l \widetilde{E}_{l-1}. \]
We note that $\displaystyle{\lim_{\theta_l\to 0+}\frac{d E^\theta_l}{d \theta_l} = \widetilde{E}_{l-1}-\frac{1}{v}\rho_l\eta_l\sigma_{l-1}}$. If $\rho_l>\frac{v\widetilde{E}_{l-1}}{\sigma_{l-1}\eta_l}$ this is negative, and we will find a minimum of $E^\theta_l$ for some $\theta_l>0$. Similarly, $\displaystyle{\lim_{\theta_l\to 0-}\frac{d E^\theta_l}{d \theta_l} = -\widetilde{E}_{l-1}-\frac{1}{v}\rho_l\eta_l\sigma_{l-1}}$. If $\rho_l<-\frac{v\widetilde{E}_{l-1}}{\sigma_{l-1}\eta_l}$ this is positive, and we will find a minimum of $E^\theta_l$ for some $\theta_l<0$. Otherwise, the minimum of $E^\theta_l$ will occur when $\theta_l=0$. In this case, we effectively make level $l$ our new coarsest level.

When $\abs{\rho_l}>\frac{v\widetilde{E}_{l-1}}{\sigma_{l-1}\eta_l}$, equating $\frac{d E^\theta_l}{d \theta_l}$ to zero, moving the second term to the right hand side, multiplying through by $v\Delta^\theta_l$ and squaring gives, at the optimal value,
\[ \eta_l^2\sigma_{l-1}^2\bigl(\widetilde{\Delta}_l^2 - \sigma_l^2(1-\rho_l^2)\bigr) = v^2\widetilde{E}^2_{l-1}\widetilde{\Delta}_l^2.\]
Since $\frac{v\widetilde{E}_{l-1}}{\sigma_{l-1}\eta_l}<\abs{\rho_l}\leq 1$ we can solve this equation to obtain \eqref{eq:WMLMCthm1}. We can also rearrange the equation to obtain \eqref{eq:WMLMCthm2}, noting that the sign of $\widetilde{\theta}_l$ agrees with the sign of $\rho_l$.
\end{proof}
We summarize the resulting WMLMC estimator in the following corollary to Proposition~\ref{prop:WMLMC}, which is proven straightforwardly by induction.
\begin{corollary}
\label{cor:WMLMC}
The weighted multilevel Monte Carlo (WMLMC) estimator at level $L$ is given by
\begin{equation}
  \label{eq:WMLMC}
  \widetilde{\cal P}_L = \sum_{l=0}^L\widetilde\Theta^L_{l}\; \ps{N^L_{l}\,}{\widetilde{Y}_{l}},\quad l=0,\dots,L
\end{equation}
with 
\begin{equation}
\label{eq:Ytilde}
\widetilde{Y}_0 = P_0\word{and}\widetilde{Y}_l = P_l - \widetilde{\theta}_lP^l_{l-1}, \quad l=1,\dots,L,
\end{equation}
where $\widetilde\theta_l$ is defined in \eqref{eq:WMLMCthm2}, and with
with
\begin{equation}
\label{eq:MLMCWThetaTilde}
\widetilde\Theta^l_l = 1, 0\leq l\leq L, \word{and} 
\widetilde\Theta^L_{l} = \prod_{k=l+1}^L\widetilde\theta_k,\quad 0\leq l< \leq L.
\end{equation}
The number of samples to be used at each level is given by
\begin{equation}
  \label{eq:WMLMCN}
  N^L_{l} = \frac{\widetilde{E}_L\widetilde{\Delta}_l\abs{\widetilde\Theta^L_l}}{v\eta_l},\quad 0\leq l\leq L,
\end{equation}
with $\widetilde{E}_l$ and $\widetilde{\Delta}_l$ as given in Proposition~\ref{prop:WMLMC}. Moreover, 
\begin{equation}
  \label{eq:WMLMCE}
  \widetilde{E}_l = \frac{1}{v}\sum_{k=0}^l\abs{\widetilde\Theta^l_k}\widetilde\Delta_k\eta_k, \quad 0\leq l\leq L.
\end{equation}
\end{corollary}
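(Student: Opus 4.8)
The plan is to prove the corollary by induction on the level, treating it as a bookkeeping consequence of Proposition~\ref{prop:WMLMC} rather than as a fresh optimisation. First I would observe that the expansion \eqref{eq:WMLMC}, together with the weight products \eqref{eq:MLMCWThetaTilde} and the control variates \eqref{eq:Ytilde}, is nothing more than the already-established expansion \eqref{eq:MLMCWexp} evaluated at the optimal weights $\widetilde\theta_l$; no work is needed there beyond renaming. The genuine content of the corollary is therefore the two closed-form expressions \eqref{eq:WMLMCN} and \eqref{eq:WMLMCE}, both of which I would extract from the recursive relations \eqref{eq:WMLMCthm1}--\eqref{eq:WMLMCthm4}.

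The key algebraic identity to isolate first is the one-step factorisation of the weight products, namely $\widetilde\Theta^l_k = \widetilde\theta_l\,\widetilde\Theta^{l-1}_k$ for $k<l$, which is immediate from the definition in \eqref{eq:MLMCWThetaTilde}; taking absolute values gives $\abs{\widetilde\Theta^l_k} = \abs{\widetilde\theta_l}\,\abs{\widetilde\Theta^{l-1}_k}$. With this in hand I would establish the cost formula \eqref{eq:WMLMCE} by induction on $l$. The base case $l=0$ reads $\widetilde{E}_0 = \sigma_0\eta_0/v$, which matches the sum since $\widetilde\Theta^0_0 = 1$ and $\widetilde\Delta_0 = \sigma_0$. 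For the inductive step I would substitute the inductive hypothesis for $\widetilde{E}_{l-1}$ into the recursion \eqref{eq:WMLMCthm3}, distribute $\abs{\widetilde\theta_l}$ across the sum, apply the factorisation identity to turn each $\abs{\widetilde\theta_l}\,\abs{\widetilde\Theta^{l-1}_k}$ into $\abs{\widetilde\Theta^l_k}$, and absorb the leading $\widetilde\Delta_l\eta_l/v$ as the $k=l$ term (using $\widetilde\Theta^l_l = 1$).

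For \eqref{eq:WMLMCN} I would argue by telescoping rather than induction. Starting from the definition $N^L_l = \alpha_l\beta_{l+1}\cdots\beta_L$ carried over from \eqref{eq:MLMCWexp}, and inserting $\alpha_l = \widetilde{E}_l\widetilde\Delta_l/(\eta_l v)$ together with $\beta_k = \widetilde{E}_k\abs{\widetilde\theta_k}/\widetilde{E}_{k-1}$ from \eqref{eq:WMLMCthm4}, the product of the $\widetilde{E}$-ratios collapses to $\widetilde{E}_L/\widetilde{E}_l$ and the product of the $\abs{\widetilde\theta_k}$ collapses to $\abs{\widetilde\Theta^L_l}$, leaving exactly \eqref{eq:WMLMCN}.

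The main thing to watch --- the only place the argument is not purely mechanical --- is the degenerate branch of Proposition~\ref{prop:WMLMC} where $\abs{\rho_l}$ is too small and $\widetilde\theta_l = 0$, $\beta_l = 0$. I would check that both formulas remain valid there by consistency: if $\widetilde\theta_l = 0$ then every $\widetilde\Theta^L_{l'}$ and $\widetilde\Theta^l_{l'}$ with $l'<l$ acquires a vanishing factor, so the corresponding $N^L_{l'}$ are zero (matching the $\beta_l = 0$ factor in the telescoped product) and the corresponding terms drop out of the sum in \eqref{eq:WMLMCE}. Because the recursion \eqref{eq:WMLMCthm3} reduces to $\widetilde{E}_l = \widetilde\Delta_l\eta_l/v$ precisely when $\widetilde\theta_l=0$, the same induction and telescoping carry through without a separate case, so the only real care needed is tracking absolute values --- the signed $\widetilde\theta_k$ appearing in $\widetilde\Theta$ versus the $\abs{\widetilde\theta_k}$ appearing in $\beta_k$ and $\widetilde{E}_l$ --- and verifying that $\abs{\prod_k\widetilde\theta_k} = \prod_k\abs{\widetilde\theta_k}$, which legitimises both collapses.
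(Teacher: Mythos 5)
Your proposal is correct and matches the paper's approach: the paper states only that the corollary ``is proven straightforwardly by induction,'' and your induction on $l$ for \eqref{eq:WMLMCE} together with the telescoping of $\alpha_l\beta_{l+1}\cdots\beta_L$ for \eqref{eq:WMLMCN} (mirroring the computation of $N^l_{l'}$ in the paper's earlier unweighted lemma) is exactly the intended argument. Your handling of the degenerate $\widetilde\theta_l=0$ branch and of the absolute values is a sound filling-in of details the paper leaves implicit.
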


To help us explore the implications of Proposition~\ref{prop:WMLMC},
we introduce notation for the \emph{normalised cost}
\begin{equation}
\label{eq:WMLMCnc} \widetilde{\delta}_l = \frac{v\widetilde{E_l}}{\sigma_l\eta_l},
\end{equation}
which can be seen to be the ratio between $\widetilde{E}_l$ and the (square root) single-level cost $\sigma_l\eta_l/v$ (again assuming\footnote{This assumption can be relaxed, adding a constant $\lambda$ (say) that is the ratio of the cost of generating a sample of $Y^\theta_l$ and the cost of generating a single-level sample $P_l$. This complicates things a little but does not fundamentally change any of the results presented here.} that the cost of generating a sample of $Y^\theta_l$ is the same as the cost of generating a single-level sample $P_l$, even when $\theta_l\ne 0$). 
\begin{corollary}
Suppose that the values of $\widetilde{\theta}_l$ are defined as in Proposition~\ref{prop:WMLMC}, and that $\widetilde{\delta}_l$ is as defined in \eqref{eq:WMLMCnc}. Then $\widetilde{\delta}_0 =1$, and, for $l=1,2,\dots$,
\begin{equation}
\label{eq:WMLMCdelta}
\widetilde{\delta}_l = \begin{cases}
\mu_l\abs{\rho_l}\widetilde{\delta}_{l-1} + \sqrt{1-\rho_l^2}\sqrt{1-\mu^2_l\widetilde{\delta}_{l-1}^2} & \word{if} \abs{\rho_l}> \mu_l\widetilde{\delta}_{l-1}\\
1 & \word{otherwise,}
\end{cases}
\end{equation}
where  $\mu_l = \eta_{l-1}/\eta_l$.
\end{corollary}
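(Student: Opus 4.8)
The plan is to substitute the closed-form expressions from Proposition~\ref{prop:WMLMC} directly into the definition \eqref{eq:WMLMCnc} of the normalised cost and simplify. The single observation that makes everything fall into place is that the quantity $\frac{v\widetilde{E}_{l-1}}{\sigma_{l-1}\eta_l}$ appearing throughout Proposition~\ref{prop:WMLMC} equals $\mu_l\widetilde{\delta}_{l-1}$: from \eqref{eq:WMLMCnc} we have $\frac{v\widetilde{E}_{l-1}}{\sigma_{l-1}} = \widetilde{\delta}_{l-1}\eta_{l-1}$, and dividing by $\eta_l$ gives $\mu_l\widetilde{\delta}_{l-1}$ with $\mu_l = \eta_{l-1}/\eta_l$. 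Applying this identity to the case distinction $\abs{\rho_l}>\frac{v\widetilde{E}_{l-1}}{\sigma_{l-1}\eta_l}$ of Proposition~\ref{prop:WMLMC} immediately produces the threshold $\abs{\rho_l}>\mu_l\widetilde{\delta}_{l-1}$ that appears in \eqref{eq:WMLMCdelta}.

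For the base case and the degenerate branch I would simply read off the result: $\widetilde{E}_0 = \sigma_0\eta_0/v$ gives $\widetilde{\delta}_0 = 1$, and in the ``otherwise'' branch $\widetilde{E}_l = \sigma_l\eta_l/v$ gives $\widetilde{\delta}_l = 1$, matching the second line of \eqref{eq:WMLMCdelta}. For the main branch, I would start from \eqref{eq:WMLMCthm3} and write $\widetilde{\delta}_l = \frac{v\widetilde{E}_l}{\sigma_l\eta_l} = \frac{\widetilde{\Delta}_l}{\sigma_l} + \frac{v\abs{\widetilde{\theta}_l}\widetilde{E}_{l-1}}{\sigma_l\eta_l}$. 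The first term follows from \eqref{eq:WMLMCthm1} after replacing $\frac{v^2\widetilde{E}_{l-1}^2}{\sigma_{l-1}^2\eta_l^2}$ by $\mu_l^2\widetilde{\delta}_{l-1}^2$, giving $\frac{\widetilde{\Delta}_l}{\sigma_l} = \frac{\sqrt{1-\rho_l^2}}{\sqrt{1-\mu_l^2\widetilde{\delta}_{l-1}^2}}$.

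For the second term I would use that $\sgn\widetilde{\theta}_l = \sgn\rho_l$, as noted in Proposition~\ref{prop:WMLMC}, so that $\abs{\widetilde{\theta}_l} = \sgn\rho_l\cdot\widetilde{\theta}_l = \frac{\abs{\rho_l}\sigma_l}{\sigma_{l-1}} - \frac{\widetilde{\Delta}_l v\widetilde{E}_{l-1}}{\sigma_{l-1}^2\eta_l}$ from \eqref{eq:WMLMCthm2}. Multiplying by $\frac{v\widetilde{E}_{l-1}}{\sigma_l\eta_l}$ and again invoking $\frac{v\widetilde{E}_{l-1}}{\sigma_{l-1}\eta_l}=\mu_l\widetilde{\delta}_{l-1}$ turns this term into $\mu_l\abs{\rho_l}\widetilde{\delta}_{l-1} - \frac{\widetilde{\Delta}_l}{\sigma_l}\mu_l^2\widetilde{\delta}_{l-1}^2$. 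Adding the two terms, the contributions proportional to $\frac{\widetilde{\Delta}_l}{\sigma_l}$ combine as $\frac{\widetilde{\Delta}_l}{\sigma_l}(1-\mu_l^2\widetilde{\delta}_{l-1}^2) = \sqrt{1-\rho_l^2}\sqrt{1-\mu_l^2\widetilde{\delta}_{l-1}^2}$, leaving exactly $\widetilde{\delta}_l = \mu_l\abs{\rho_l}\widetilde{\delta}_{l-1} + \sqrt{1-\rho_l^2}\sqrt{1-\mu_l^2\widetilde{\delta}_{l-1}^2}$, as claimed.

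None of the steps presents a genuine obstacle; the only point requiring care is the correct handling of $\abs{\widetilde{\theta}_l}$ through the sign identity, since this is what allows the two occurrences of $\frac{\widetilde{\Delta}_l}{\sigma_l}$ to cancel into the clean square-root product rather than leaving a residual fraction. The argument is then phrased as an induction on $l$: the inductive hypothesis supplies a well-defined $\widetilde{\delta}_{l-1}$, and the identity $\frac{v\widetilde{E}_{l-1}}{\sigma_{l-1}\eta_l}=\mu_l\widetilde{\delta}_{l-1}$ is precisely the bridge that lets the recursion of Proposition~\ref{prop:WMLMC} be rewritten as the self-contained recursion \eqref{eq:WMLMCdelta} in the normalised variables.
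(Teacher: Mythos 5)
Your proposal is correct and follows exactly the route the paper takes: the paper's proof is the one-line remark that the result follows from substituting the optimal $\widetilde{\theta}_l$ from \eqref{eq:WMLMCthm2} into \eqref{eq:WMLMCthm3}, and your argument is precisely that substitution carried out in detail, with the key normalising identity $\frac{v\widetilde{E}_{l-1}}{\sigma_{l-1}\eta_l}=\mu_l\widetilde{\delta}_{l-1}$ and the sign relation $\sgn\widetilde{\theta}_l=\sgn\rho_l$ handled correctly. The algebra checks out in all three cases (base, main branch, and the $\widetilde{\theta}_l=0$ branch).
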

\begin{proof}
  The result follows from substituting the optimal value of $\theta_l$ from \eqref{eq:WMLMCthm2} into \eqref{eq:WMLMCthm3}.
\end{proof}
We see immediately from \eqref{eq:WMLMCdelta} that the cost of the WMLMC estimator is bounded from above by the single level cost, so that $\widetilde{\delta}_l\leq 1$, and that it is strictly lower than that cost whenever $\abs{\rho_l}>\mu_l$. Moreover, since each MLMC estimator ${\cal P}_L$ is a particular case of the weighted estimator ${\cal P}^\theta_L$ with each $\theta_l=1$, the cost of the optimally weighted estimator $\widetilde{\cal P}_L$ will be no greater than the cost of the MLMC estimator. This implies that we can inherit the results of the MLMC Complexity Theorem (Theorem~\ref{thm:MLMCComplexity}). Nevertheless, we now give a direct statement of the corresponding result for the optimally weighted MLMC method.

\subsection{Complexity theorem}
\begin{lemma}
\label{lem:WMLMCComplexity}
Let $L>0$. Suppose that the values of $\widetilde{\theta}_l$ are defined as in Proposition~\ref{prop:WMLMC},  that $\widetilde{\delta}_l$ is as defined in \eqref{eq:WMLMCnc}, and further that $\abs{\rho_{l}}>\mu_{l}$, $l=1,2,\dots,L$. Then the computational effort required to compute $\widetilde{\cal P}_L$, with variance $v^2$, is $\widetilde{E}_L^2$, where
\begin{equation}
\widetilde{E}_L \leq \frac{\sigma_L}{v}\sum_{l=1}^L\eta_{l}\sqrt{1-\rho^2_{l}}.
\end{equation}
\end{lemma}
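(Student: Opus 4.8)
The plan is to reduce the statement to the scalar recursion for the normalised cost $\widetilde{\delta}_l$ in \eqref{eq:WMLMCdelta}, since $\widetilde{E}_L$ and $\widetilde{\delta}_L$ differ only by the explicit factor $\sigma_L\eta_L/v$. First I would verify that the hypotheses keep us permanently in the non-trivial branch of \eqref{eq:WMLMCdelta}: because $\widetilde{\delta}_{l-1}\le 1$ (as observed immediately after the corollary) and $\abs{\rho_l}>\mu_l$ by assumption, we get $\abs{\rho_l}>\mu_l\ge\mu_l\widetilde{\delta}_{l-1}$ at every level, so $\widetilde{\delta}_l=\mu_l\abs{\rho_l}\widetilde{\delta}_{l-1}+\sqrt{1-\rho_l^2}\sqrt{1-\mu_l^2\widetilde{\delta}_{l-1}^2}$ holds for all $l=1,\dots,L$. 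Note this also forces $\mu_l<1$, hence $\eta_{l-1}<\eta_l$.

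The key manoeuvre is to clear the normalisation by setting $b_l:=\eta_l\widetilde{\delta}_l=v\widetilde{E}_l/\sigma_l$. Multiplying the recursion by $\eta_l$ and using $\eta_l\mu_l=\eta_{l-1}$ together with $\mu_l^2\widetilde{\delta}_{l-1}^2=b_{l-1}^2/\eta_l^2$ collapses it to the clean form
\[ b_l=\abs{\rho_l}\,b_{l-1}+\sqrt{1-\rho_l^2}\,\sqrt{\eta_l^2-b_{l-1}^2}, \]
where the square root is real since $b_{l-1}\le\eta_{l-1}<\eta_l$. From here the bound follows by telescoping: estimating $\abs{\rho_l}\le 1$ and $\sqrt{\eta_l^2-b_{l-1}^2}\le\eta_l$ gives the increment inequality $b_l-b_{l-1}\le\eta_l\sqrt{1-\rho_l^2}$, and summing over $l=1,\dots,L$ yields $b_L\le b_0+\sum_{l=1}^L\eta_l\sqrt{1-\rho_l^2}$. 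Multiplying through by $\sigma_L/v$ and recalling $\widetilde{E}_L=(\sigma_L/v)\,b_L$ then produces the stated estimate.

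The single point that needs care---and the main obstacle to recovering the constant exactly as displayed---is the base term $b_0=\eta_0\widetilde{\delta}_0=\eta_0$. The naive telescoping leaves an extra additive $\sigma_L\eta_0/v$ on the right-hand side, and a direct check at $L=1$ shows it cannot simply be dropped in general. To land on the displayed inequality I would either fold this coarsest-level single-sample cost into the sum by adopting the convention $\rho_0=0$ (so that $\eta_0=\eta_0\sqrt{1-\rho_0^2}$ is the $l=0$ summand), or argue that it is a fixed lower-order constant that does not affect the asymptotic complexity conclusions the lemma is used for. Everything else is routine once the recursion is written in the collapsed form above.
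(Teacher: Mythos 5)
Your argument follows the same route as the paper's own proof: bound the recursion \eqref{eq:WMLMCdelta} by $\widetilde{\delta}_l\leq\mu_l\widetilde{\delta}_{l-1}+\sqrt{1-\rho_l^2}$, multiply through by $\eta_l$ so that $\eta_l\mu_l=\eta_{l-1}$ turns this into a telescoping increment for $b_l=\eta_l\widetilde{\delta}_l$, and sum. The point you flag at the end is not a defect of your write-up but a genuine slip in the paper: its proof asserts $\widetilde{\delta}_L\leq \frac{1}{\eta_L}\sum_{l=1}^L\eta_l\sqrt{1-\rho_l^2}$ "by induction," silently discarding the base contribution $b_0=\eta_0\widetilde{\delta}_0=\eta_0$, whereas the telescoping actually yields $\widetilde{E}_L\leq\frac{\sigma_L}{v}\left(\eta_0+\sum_{l=1}^L\eta_l\sqrt{1-\rho_l^2}\right)$. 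Your $L=1$ check confirms the extra term cannot be dropped: the exact value $\widetilde{\delta}_1=\mu_1\abs{\rho_1}+\sqrt{1-\rho_1^2}\sqrt{1-\mu_1^2}$ exceeds $\sqrt{1-\rho_1^2}$ whenever $\rho_1$ is near $1$ (e.g.\ $\rho_1=0.9$, $\mu_1=0.8$ gives approximately $0.98$ versus $0.44$), so the lemma as displayed is false without the $\sigma_L\eta_0/v$ correction or the convention that the sum starts at $l=0$ with $\rho_0=0$. Either of your proposed repairs is adequate for Theorem~\ref{thm:WMLMCComplexity}, since the extra term contributes only a fixed $O(\sigma_L\eta_0/v)$ that does not alter the complexity classification.
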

\begin{proof}
From \eqref{eq:WMLMCdelta}, we see immediately that
\[ \widetilde{\delta}_l\leq \mu_l\widetilde{\delta}_{l-1}+\sqrt{1-\rho_l^2}. \]
Then, since $\widetilde{\delta}_0=1$, and given that $\mu_l=\eta_{l-1}/\eta_l$, we can deduce by induction that
\[ \widetilde{\delta}_L\leq \frac{1}{\eta_L}\sum_{l=1}^L\eta_{l}\sqrt{1-\rho^2_{l}}.\]
The result follows.
\end{proof}
Lemma~\ref{lem:WMLMCComplexity} can be used to prove the following result. The proof is similar to the proof of Theorem~\ref{thm:MLMCComplexity} and will not be given here.
\begin{theorem}
\label{thm:WMLMCComplexity}
Suppose that there are positive constants $c_1, c_2, c_3, \alpha,\beta,\gamma$ with $2\alpha\geq\min(\beta,\gamma)$ such that, for $l=0,1,2,\dots$,
\begin{enumerate}
\item $\abs{\overline{P}_l-\overline{P}}\leq c_1 2^{-\alpha l} $,
\item $\sqrt{1-\rho^2_l} \leq c_2 2^{-\beta l}$, and
\item $\eta_l \leq c_3 2^{\gamma l}$.
\end{enumerate}
Then there exists a positive constant $c_4$ such that, for any $\epsilon<1/e$ there exists an integer $L$ such that $\widetilde{\cal P}_L$ (as defined by \eqref{eq:MLMCP}--\eqref{eq:MLMCN}) has mean squared error with bound
\[ \E[(\widetilde{\cal P}_L-\overline{P})^2] < \epsilon^2\]
with a computational complexity $C$ with bound
\[ C \leq \begin{cases}
c_4\epsilon^{-2} & \beta>\gamma,\\
c_4\epsilon^{-2}(\log\epsilon)^2 & \beta=\gamma,\\
c_4\epsilon^{-2-\frac{\gamma-\beta}{\alpha}} & \beta<\gamma.
\end{cases}
\]
\end{theorem}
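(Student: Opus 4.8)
The plan is to transcribe the proof of Theorem~\ref{thm:MLMCComplexity} (the classical Giles argument), using Lemma~\ref{lem:WMLMCComplexity} as the only quantitative input. By construction (Proposition~\ref{prop:WMLMC}) the estimator $\widetilde{\cal P}_L$ has variance exactly $v^2$, so the mean squared error splits as $\E[(\widetilde{\cal P}_L-\overline{P})^2] = v^2 + (\overline{P}_L-\overline{P})^2$, and it suffices to force each term below $\epsilon^2/2$. First I would set $v=\epsilon/\sqrt{2}$, which controls the variance term, and then invoke hypothesis~1 to take $L$ to be the smallest integer with $c_1 2^{-\alpha L}\le \epsilon/\sqrt{2}$. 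Minimality gives two-sided control of the finest level: $2^{-\alpha L}$ is of order $\epsilon$, so that $2^{\alpha L}=O(\epsilon^{-1})$, equivalently $2^{L}=O(\epsilon^{-1/\alpha})$ and $L=O(\log(1/\epsilon))$. This reproduces exactly the level-selection step of Theorem~\ref{thm:MLMCComplexity}.

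Next I would bound the cost via Lemma~\ref{lem:WMLMCComplexity}, which gives $\widetilde{E}_L \le \frac{\sigma_L}{v}\sum_{l=1}^L\eta_l\sqrt{1-\rho_l^2}$. The crucial observation is that the summand is precisely the product controlled by hypotheses~2 and~3, namely $\eta_l\sqrt{1-\rho_l^2}\le c_2c_3\,2^{-(\beta-\gamma)l}$. Using that $\sigma_l\to\sigma$ (the standard deviation of the limit $P$) and is therefore bounded by some $\overline{\sigma}$, the total cost satisfies $C=\widetilde{E}_L^2 \le \frac{\overline{\sigma}^2(c_2c_3)^2}{v^2}\Bigl(\sum_{l=1}^L 2^{-(\beta-\gamma)l}\Bigr)^2$.

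I would then evaluate the geometric sum in the three regimes, exactly as in Giles' proof. When $\beta>\gamma$ the common ratio is below $1$, the sum is bounded by a constant, and $C=O(v^{-2})=O(\epsilon^{-2})$. When $\beta=\gamma$ the sum equals $L=O(\log(1/\epsilon))$, so $C=O(v^{-2}L^2)=O(\epsilon^{-2}(\log\epsilon)^2)$. When $\beta<\gamma$ the sum is dominated by its last term, giving $O(2^{(\gamma-\beta)L})$; substituting the level bound $2^{L}=O(\epsilon^{-1/\alpha})$ turns this into a power of $\epsilon$, which combined with the $v^{-2}$ prefactor yields the advertised $\epsilon$-power after the same bookkeeping as in Theorem~\ref{thm:MLMCComplexity}. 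The structural hypothesis $2\alpha\ge\min(\beta,\gamma)$ enters here in the same role as in the classical argument, ensuring the level chosen for the bias is consistent with the stated complexity.

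Given the Lemma, the case analysis above is routine; two points need genuine care. The first is that Lemma~\ref{lem:WMLMCComplexity} is stated under $\abs{\rho_l}>\mu_l$, a hypothesis absent from the theorem. This is disposed of cleanly by noting that the driving inequality $\widetilde{\delta}_l\le \mu_l\widetilde{\delta}_{l-1}+\sqrt{1-\rho_l^2}$ holds in \emph{both} branches of \eqref{eq:WMLMCdelta}: in the trivial branch $\widetilde{\delta}_l=1$ while $\abs{\rho_l}\le\mu_l\widetilde{\delta}_{l-1}$, and $\abs{\rho_l}+\sqrt{1-\rho_l^2}\ge 1$, so the bound of Lemma~\ref{lem:WMLMCComplexity} is in fact available at every level. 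The second point, which I regard as the main thing to pin down, is the factor $\sigma_L$ appearing in the cost bound but absent from the MLMC estimate: the argument implicitly needs $\{\sigma_l\}$ bounded, so I would record this (mild) standing assumption and check that its bound is absorbed into $c_4$ uniformly in $\epsilon$ before the Giles template applies verbatim.
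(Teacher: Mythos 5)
The paper itself omits the proof of this theorem, stating only that it is ``similar to the proof of Theorem~\ref{thm:MLMCComplexity}'' with Lemma~\ref{lem:WMLMCComplexity} as the quantitative input, so your proposal is a fleshed-out version of exactly the intended argument: split the MSE as $v^2+(\overline{P}_L-\overline{P})^2$ (using unbiasedness and the fixed variance $v^2$), set $v=\epsilon/\sqrt2$, choose $L$ minimally from hypothesis~1, and feed hypotheses~2--3 into the cost bound of the Lemma. Your two ``points of care'' are genuine and correctly resolved: the inequality $\widetilde{\delta}_l\le\mu_l\widetilde{\delta}_{l-1}+\sqrt{1-\rho_l^2}$ does hold in the $\widetilde{\theta}_l=0$ branch as well, because there $1=\widetilde{\delta}_l$ while $\mu_l\widetilde{\delta}_{l-1}+\sqrt{1-\rho_l^2}\ge\abs{\rho_l}+\sqrt{1-\rho_l^2}\ge 1$; and the boundedness of $\{\sigma_l\}$ is indeed a needed (if mild) standing assumption.

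There is, however, one step that does not close as written: the ``same bookkeeping'' claim in the case $\beta<\gamma$. In Theorem~\ref{thm:MLMCComplexity} the hypotheses bound the \emph{squares}, $\Delta_l^2\le c_2 2^{-\beta l}$ and $\eta_l^2\le c_3 2^{\gamma l}$, so the summand $\Delta_l\eta_l$ behaves like $2^{(\gamma-\beta)l/2}$ and squaring the sum yields $O\bigl(2^{(\gamma-\beta)L}\bigr)=O\bigl(\epsilon^{-(\gamma-\beta)/\alpha}\bigr)$. Here hypotheses~2--3 bound $\sqrt{1-\rho_l^2}$ and $\eta_l$ themselves, so your summand is $\eta_l\sqrt{1-\rho_l^2}\le c_2c_3\,2^{(\gamma-\beta)l}$ with no factor of $\tfrac12$ in the exponent; squaring the sum then produces $O\bigl(2^{2(\gamma-\beta)L}\bigr)=O\bigl(\epsilon^{-2(\gamma-\beta)/\alpha}\bigr)$, i.e.\ $C=O\bigl(\epsilon^{-2-2(\gamma-\beta)/\alpha}\bigr)$ rather than the advertised $O\bigl(\epsilon^{-2-(\gamma-\beta)/\alpha}\bigr)$. (The cases $\beta>\gamma$ and $\beta=\gamma$ are unaffected, since there only the sign of $\beta-\gamma$ and the bound $L=O(\log(1/\epsilon))$ enter.) To obtain the stated conclusion you must either restate hypotheses~2--3 as $1-\rho_l^2\le c_2 2^{-\beta l}$ and $\eta_l^2\le c_3 2^{\gamma l}$ --- the normalisation consistent with Theorem~\ref{thm:MLMCComplexity}, and natural here since $\widetilde{\Delta}_l^2$ scales like $\sigma_l^2(1-\rho_l^2)$ --- or accept the doubled excess exponent. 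This mismatch is arguably inherited from the theorem statement itself, but your writeup asserts rather than checks precisely this step, so it should be pinned down explicitly.
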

\subsection{Comparison with MLMC}
If we also define $\delta_l = \frac{vE_l}{\eta_l\sigma_l}$, then the MLMC analogue to \eqref{eq:WMLMCdelta} is
\begin{equation}
\label{eq:MLMCdelta}
\delta_l = \begin{cases}
\frac{\sigma_{l-1}}{\sigma_l}\mu_l\delta_{l-1}+\frac{\Delta_l}{\sigma_l} &\word{if}
\rho_l > \mu_l\delta_{l-1}+\frac{\sigma_{l-1}}{2\sigma_l}\left( 1-\mu_l^2\delta_{l-1}^2\right)
\\
1 & \word{otherwise.}
\end{cases}
\end{equation}
Given that the MLMC estimator is a special case of the weighted MLMC estimator (with $\theta_l\equiv 1$), it is necessarily true that $\widetilde\delta_l\leq {\delta}_l$. This can be verified to be the case directly using \eqref{eq:WMLMCdelta} and \eqref{eq:MLMCdelta}. Indeed, it is possible to show from these equations that, if $\widetilde{\delta}_{l-1}=\delta_{l-1}$, $\widetilde{\delta}_l \leq \delta_l$, with equality when $\abs{\rho_l}>\delta_{l-1}\mu_l$ only if
\[ \frac{\sigma_{l-1}}{\sigma_l} = \rho_l - \mu_l\delta_{l-1}\sqrt{\frac{1-\rho_l^2}{1-\mu_l\delta_{l-1}^2}}. \]
Note that, as $l\to\infty$, we expect that $\sigma_{l-1}/\sigma_{l}\to 1$, and $\rho_l\to 1$, so that we do not anticipate that adding (the optimal) weights will improve the \emph{asymptotic} complexity of the method. 

However, signficant gains are still possible in some settings, mainly because of the potential to make more efficient use of the low cost--high bias--low correlation samples at the coarser levels. To gain some insight into this, we consider the simplified situation where the single-level samples all have the same standard deviation $\sigma$, and we are working on a geometric grid with $M=2$, i.e.~$\mu_l = 1/\sqrt{2}$. We suppose too that we are at the coarsest level, so that $\delta_0=\widetilde{\delta}_0 = 1$.  Then, if $\rho_1>\frac{1}{\sqrt{2}}+\frac14$,
\begin{equation}
\label{eq:twolevelcomparison} \delta_1 = \frac{1}{\sqrt{2}} + \sqrt{2(1-\rho_1)},\quad \rho_1>\frac{1}{\sqrt{2}}+\frac14, \word{and} \widetilde{\delta}_1 = \frac{1}{\sqrt{2}}\left(\rho_1 + \sqrt{1-\rho_1^2}\right),\quad \rho_1>\frac{1}{\sqrt{2}}.
\end{equation}
This highlights the comparison with the weighted version derived from \eqref{eq:WMLMCdelta}. Again there are two terms, but the weighted version has an additional factor of $\abs{\rho_l}$ multiplying the first term, and an additional factor of $\frac12 \sqrt{1+\rho_l}$ multiplying the second term, and can be applied for a wider range of values of $\rho_1$. 

\begin{figure}
\centerline{\includegraphics[width=0.7\textwidth]{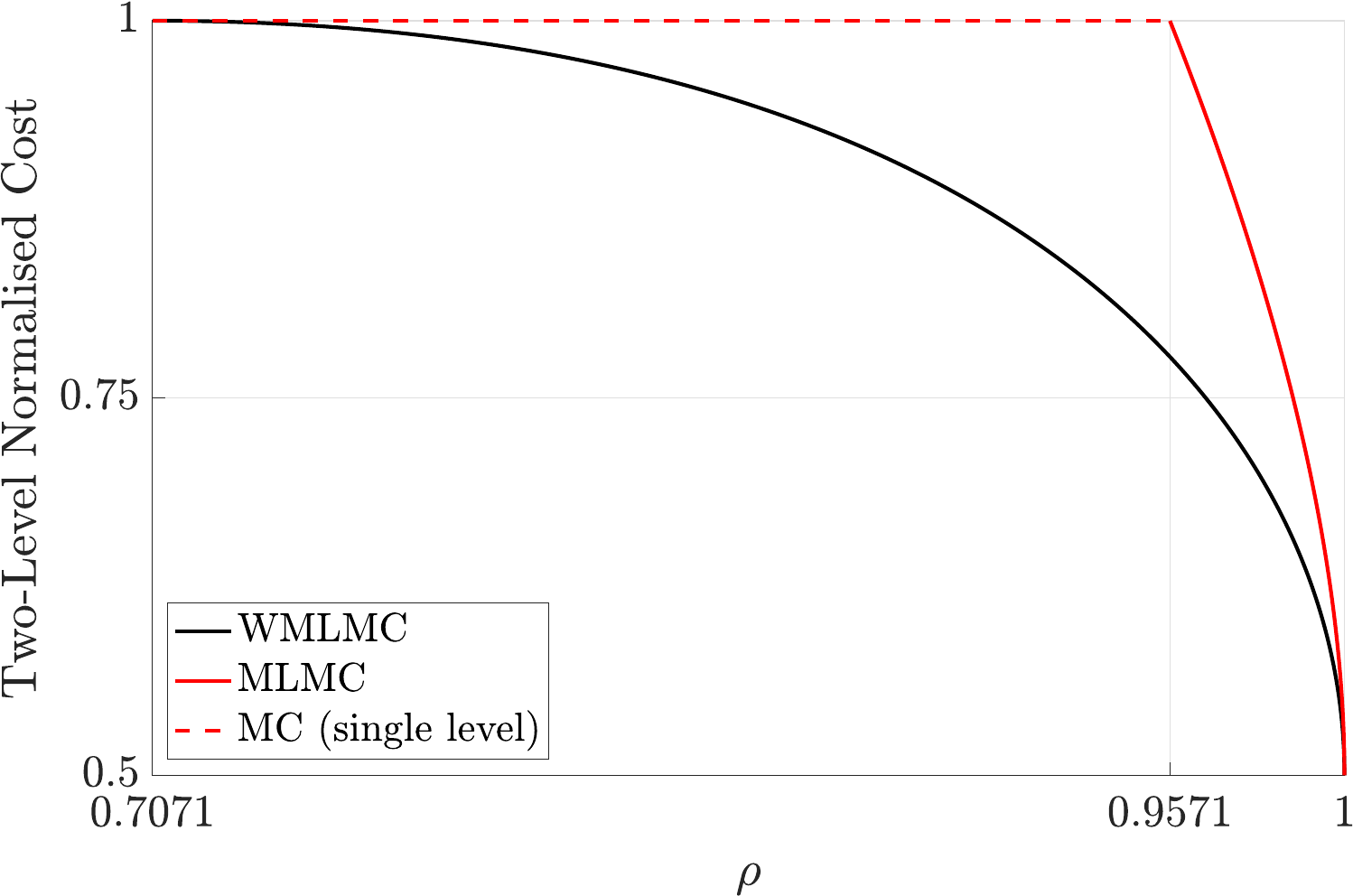}}
\caption{The normalised costs $\widetilde{\delta}^2_1$ (for the WMLMC estimator) and $\delta^2_1$ (for the MLMC estimator) with two levels, as a function of the correlation $\rho$, when $\sigma_0=\sigma_1$, and with $M=2$. Note that, for $\rho<\frac{1}{\sqrt{2}}+\frac14$, the MLMC estimator is more costly than the single-level MC estimator. The maximum ratio between these costs is $1.2865$ and occurs when $\rho=\frac{1}{\sqrt{2}}+\frac14$.}
\label{fig:twolevelcomparison}
\end{figure}

The comparative costs can be seen in Figure~\ref{fig:twolevelcomparison}, which shows $\widetilde{\delta}^2_1$ and $\delta^2_1$ (from \eqref{eq:twolevelcomparison}) as functions of the correlation $\rho$ between $P^1_0$ and $P_0$, for $\rho\geq \frac{1}{\sqrt{2}}$.  The maximum ratio between these costs is $1.2865$, which occurs when $\rho=\frac{1}{\sqrt{2}}+\frac14$. At this point, $\delta_1=1$, and for lower values of the correlation, the single-level MC estimate (with a normalised cost of $1$) is more efficient than the two-level MLMC estimate.

\begin{figure}
\centerline{\includegraphics[width=0.9\textwidth]{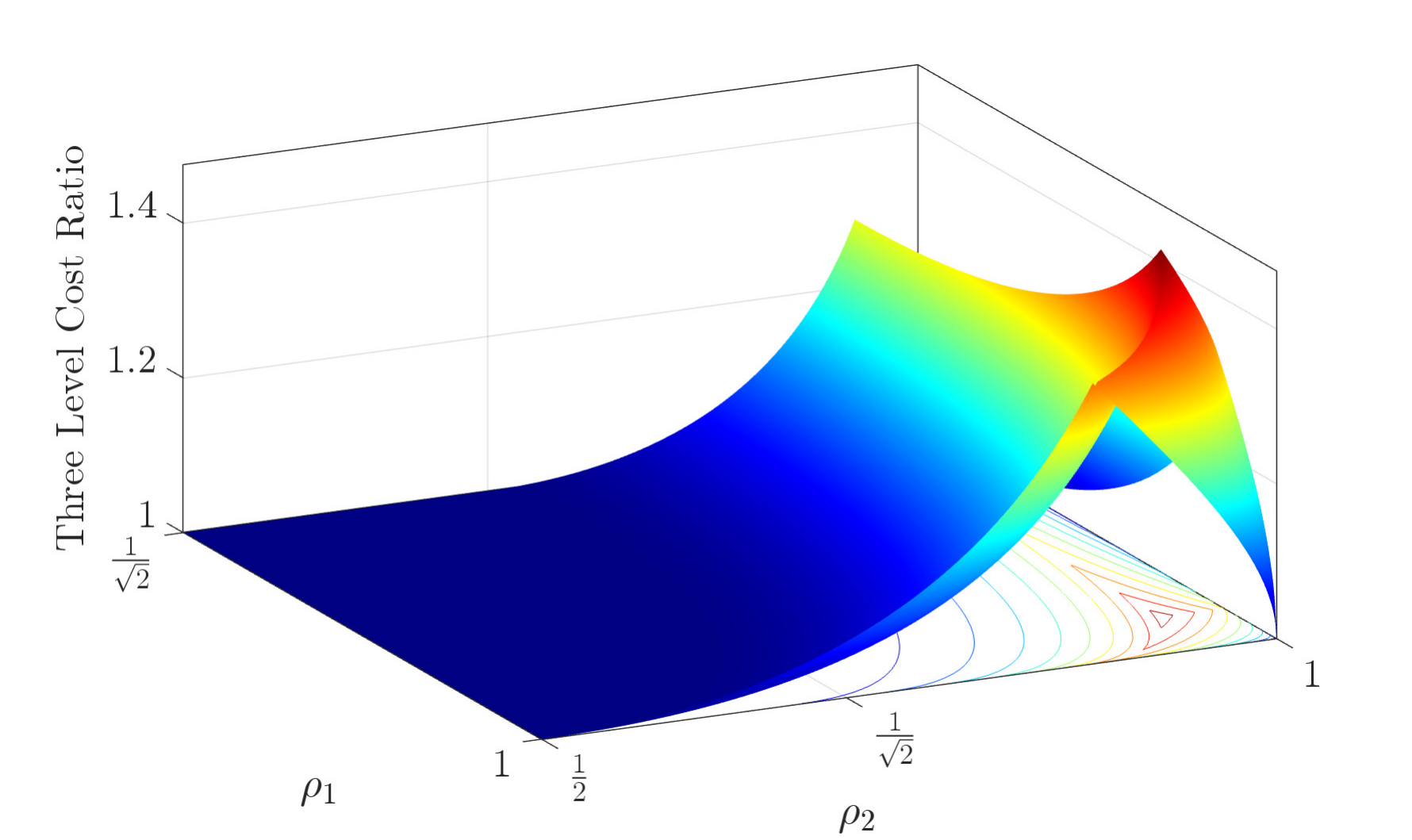}}
\caption{The ratio  $\delta_2^2/\widetilde{\delta}^2_2$ of normalised costs between the MLMC and WMLMC estimators, with three levels, shown as a function of the correlations $\rho_1$ and $\rho_2$, with $\sigma_0=\sigma_1=\sigma_2$, and with $M=2$. The maximum ratio between these costs is $1.4752$ and occurs when $\rho_1=\rho_2=\frac{1}{\sqrt{2}}+\frac14$.}
\label{fig:threelevelcomparison}
\end{figure}

In Figure~\ref{fig:threelevelcomparison} we show the three-level comparison in the form of the ratio between $\delta_2^2$ and $\widetilde{\delta}_2^2$, as a function of the two correlations $\rho_1$ and $\rho_2$, between $P^1_0$ and $P_0$ and between $P^2_1$ and $P_1$, respectively. Note that the potential efficiency savings are compounded, and are maximised when the correlations are close to $0.9571$.

\section{Weighted Multi Index Multilevel Monte Carlo}
In this section we show that a weighted version of the multi-index multilevel Monte Carlo (MIMC) formulation introduced by \cite{Haji-AliNobileTempone2015} can also be formulated. In order to do so, we review the MIMC method and show that it can be written in a recursive form, and weights added.

Our goal is to estimate a quantity $\overline{P}$ via a set of approximate estimators indexed by multi-indices\footnote{For a d-dimensional multi-index $\lambda = (\lambda_1,\dots,\lambda_d)\in \N_0^d$, we define $\abs{\lambda} = \sum_i \lambda_i$. For two multi-indices $\lambda$ and ${\lambda'}$, we say that ${\lambda'}\leq\lambda$ if ${\lambda'}_i\leq\lambda_i$, $i=1,\dots,d$, and ${\lambda'}<\lambda$ if ${\lambda'}\leq\lambda$ and ${\lambda'}\ne\lambda$. We also define $\lambda\wedge{\lambda'}$ to be the multi-index with $i$th entry $\min\{\lambda_i,{\lambda'}_i\}$, and we define $\lambda\vee{\lambda'}$ similarly. We also define subsets $\square^-_\lambda = \{{\lambda'}\geq {\bf 0}: {\bf 0}<\lambda-{\lambda'}\leq {\bf 1}\}$ and $\square^+_\lambda = \{{\lambda'}\geq {\bf 0}: {\bf 0}<{\lambda'}-\lambda\leq {\bf 1}\}$.} $\lambda\in \N_0^d$. These have means $\overline{P}_\lambda$, with $\overline{P}_\lambda\to\overline{P}$ as $\min_i \{\lambda_i\}\to\infty$.

A single-level estimator for $\overline{P}_\lambda$ will be denoted by $P_\lambda$. Whenever a sample of $P_\lambda$ is computed, we also compute $\{P^\lambda_{\lambda'}\}_{{\lambda'}\in\square^-_\lambda}$, a collection of estimators for the lower index quantities $\{\overline{P}_{\lambda'}\}_{{\lambda'}\in\square^-_\lambda}$, but computed from the same underlying random variable(s). These are collected  into basic estimators 
\[ Y_\lambda = P_\lambda - \sum_{{\lambda'}\in\square^-_\lambda}\epsilon^\lambda_{\lambda'} P^\lambda_{\lambda'},\]
where $\epsilon^\lambda_{\lambda'}: = (-1)^{1+\abs{\lambda-{\lambda'}}}$. We denote the computational effort needed to compute one sample of $Y_\lambda$ by $\eta_\lambda^2$, and by $\Delta^2_\lambda$ the variance of $Y_\lambda$. 

A multilevel estimator ${\cal P}_\Lambda$ at level $\Lambda$ is then constructed from the basic estimators $Y_\lambda$ in the form
\begin{equation}
    \label{eq:MIMLMCP}
    {\cal P}_\Lambda = \sum_{{\bf 0}\leq{\lambda}\leq \Lambda}{_{N^\Lambda_{\lambda}}}Y_{\lambda}
\end{equation} 
where ${_{N^\Lambda_{\lambda}}}Y_{\lambda}$ denotes the average of ${_{N^\Lambda_{\lambda}}}$ independent samples of $Y_{\lambda}$. It is straightforward\footnote{We note that 
\[\E[{\cal P}_\Lambda] =\sum_{{\bf 0}\leq{\lambda}\leq \Lambda}\left(\overline{P}_{\lambda} - \sum_{{\lambda'}\in\square^-_{\lambda}}\epsilon^{\lambda}_{\lambda'} \overline{P}_{\lambda'}\right) = \sum_{{\bf 0}\leq{\lambda}\leq \Lambda}\overline{P}_{\lambda}\left(1 - \sum_{{\lambda'}\in\square^+_{\lambda},\;\lambda'\leq\Lambda}\epsilon^{\lambda'}_{\lambda} \right), \] and that the term in brackets is zero unless $\lambda=\Lambda$, in which case the sum inside the bracket is empty and we are left with $\overline{P}_{\Lambda}$.} 
to verify that $\E[{\cal P}_\Lambda] = \overline{P}_\Lambda$, and that, since the $Y_\lambda$ are computed independently of one another, $\V[P_\Lambda] = \sum_{{\bf 0}\leq{\lambda}\leq\Lambda}\frac{\Delta_\lambda^2}{N^\Lambda_{\lambda}}$. The total computational effort required to compute ${\cal P}_\Lambda$ is $W^2_\Lambda=\sum_{{\bf 0}\leq{\lambda}\leq \Lambda}N^\Lambda_{\lambda}\eta^2_{\lambda}$.

We set a target variance for ${\cal P}_\Lambda$ of $v^2$, and seek to find the least computationally expensive way of achieving that. Minimising $W^2_\Lambda$, subject to $\V[{\cal P}_\Lambda]=v^2$, results in 
\begin{equation}
\label{eq:MIMLMCdeltaW}
    W_\Lambda = \frac{1}{v}\sum_{{\bf 0}\leq{\lambda}\leq\Lambda}\Delta_{\lambda}\eta_{\lambda},
\word{and} 
N^\Lambda_{\lambda} = \frac{W_{\Lambda}\Delta_{\lambda}}{v\eta_{\lambda}},\quad 0\leq\lambda\leq\Lambda.
\end{equation}
\subsection{A recursive formulation}
As we saw in the single-index case, the first step to formulating a weighted version of the method is to write it in recursive form. To wit, we have the following proposition.
\begin{proposition}
The multilevel estimators ${\cal P}_\lambda$, $\lambda\geq {\bf 0}$ defined in \eqref{eq:MIMLMCP} and \eqref{eq:MIMLMCdeltaW} satisfy the recursive relation
\begin{equation}
\label{eq:recursiveP}
    {\cal P}_\lambda = {_{\alpha_\lambda}}Y_\lambda + \sum_{{\lambda'}\in\square^-_\lambda}\epsilon^\lambda_{\lambda'}\;{_{\frac{\beta_\lambda}{\beta_{\lambda'}}}}{\cal P}_{\lambda'},
\end{equation}
where $\alpha_\lambda = N^\lambda_\lambda$ and $\beta_{\lambda} = W_\lambda$, $\lambda\geq{\bf 0}$.
\end{proposition}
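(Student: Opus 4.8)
The plan is to prove the identity by induction on $|\lambda|$, verifying that the right-hand side of \eqref{eq:recursiveP} reproduces the direct definition \eqref{eq:MIMLMCP}. The base case $\lambda={\bf 0}$ is immediate: $\square^-_{\bf 0}=\emptyset$ and $\alpha_{\bf 0}=N^{\bf 0}_{\bf 0}$, so both sides reduce to ${_{N^{\bf 0}_{\bf 0}}}Y_{\bf 0}$. The engine of the inductive step is a factorisation of the optimal sample counts: from \eqref{eq:MIMLMCdeltaW}, $N^\lambda_\mu=\frac{W_\lambda\Delta_\mu}{v\eta_\mu}$ depends on $\lambda$ only through the scalar prefactor $W_\lambda=\beta_\lambda$. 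Consequently, for any $\mu\le\lambda'\le\lambda$,
\[
\frac{\beta_\lambda}{\beta_{\lambda'}}\,N^{\lambda'}_\mu=\frac{W_\lambda}{W_{\lambda'}}\cdot\frac{W_{\lambda'}\Delta_\mu}{v\eta_\mu}=\frac{W_\lambda\Delta_\mu}{v\eta_\mu}=N^\lambda_\mu,
\]
so rescaling the sample count of every basic estimator inside ${\cal P}_{\lambda'}$ by $\beta_\lambda/\beta_{\lambda'}$ converts $N^{\lambda'}_\mu$ into exactly $N^\lambda_\mu$.

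Next I would substitute the inductive hypothesis ${\cal P}_{\lambda'}=\sum_{{\bf 0}\le\mu\le\lambda'}{_{N^{\lambda'}_\mu}}Y_\mu$ into \eqref{eq:recursiveP}, using $\alpha_\lambda=N^\lambda_\lambda$ and the rescaling above, to obtain
\[
{\cal P}_\lambda={_{N^\lambda_\lambda}}Y_\lambda+\sum_{\lambda'\in\square^-_\lambda}\epsilon^\lambda_{\lambda'}\sum_{{\bf 0}\le\mu\le\lambda'}{_{N^\lambda_\mu}}Y_\mu.
\]
It then remains to collect, for each $\mu$, the total coefficient multiplying ${_{N^\lambda_\mu}}Y_\mu$. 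The index $\mu=\lambda$ occurs only in the first summand, with coefficient $1$; for $\mu<\lambda$, interchanging the order of summation gives the coefficient $\sum_{\lambda'\in\square^-_\lambda,\,\mu\le\lambda'}\epsilon^\lambda_{\lambda'}$.

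To evaluate this, I would parametrise $\square^-_\lambda$ by the nonempty sets $A=\{i:\lambda'_i=\lambda_i-1\}$ of decremented coordinates, so that $\epsilon^\lambda_{\lambda'}=(-1)^{1+|A|}$, while the constraint $\mu\le\lambda'$ becomes $A\subseteq B$, where $B:=\{i:\mu_i<\lambda_i\}$ is nonempty precisely because $\mu<\lambda$. The coefficient is then the purely combinatorial sum
\[
\sum_{\emptyset\ne A\subseteq B}(-1)^{1+|A|}=-\bigl((1-1)^{|B|}-1\bigr)=1
\]
by the binomial theorem. Hence every ${_{N^\lambda_\mu}}Y_\mu$ with $\mu\le\lambda$ appears with coefficient $1$, which is exactly \eqref{eq:MIMLMCP} and closes the induction.

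The main obstacle — and the genuinely new feature relative to the single-index Lemma — is this signed cancellation: unlike the single-index telescoping, the weights $\epsilon^\lambda_{\lambda'}$ carry both signs, so one cannot rely on a chain of positive coefficients and must instead verify the inclusion--exclusion identity above. A secondary point needing care, as in the single-index case, is the sample bookkeeping: the nested averages must be interpreted so that repeated basic estimators $Y_\mu$ arising through different branches $\lambda'$ draw on a common pool of draws, so that the signed contributions collapse to a single average over $N^\lambda_\mu$ samples rather than a combination of independent ones.
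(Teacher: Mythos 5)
Your proof is correct and follows essentially the same route as the paper's: expand each $\mathcal{P}_{\lambda'}$ via \eqref{eq:MIMLMCP}, use the factorisation $N^\lambda_\mu = W_\lambda\Delta_\mu/(v\eta_\mu)$ so that $\tfrac{\beta_\lambda}{\beta_{\lambda'}}N^{\lambda'}_\mu = N^\lambda_\mu$, interchange the sums, and show that the signed coefficient of each ${}_{N^\lambda_\mu}Y_\mu$ equals $1$. The only difference is that you spell out the inclusion--exclusion identity $\sum_{\emptyset\ne A\subseteq B}(-1)^{1+\abs{A}}=1$ explicitly, which the paper simply asserts, and you correctly flag the sample-pooling convention that makes the signed cancellation legitimate.
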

\begin{proof}
The proof is by induction on $\lambda$. Let $\kappa>{\bf 0}$, and suppose that \eqref{eq:recursiveP} holds for all $\lambda\in\square^-_\kappa$. Then, using \eqref{eq:MIMLMCP},
\begin{align*}
{_{\alpha_\kappa}}Y_\kappa + 
\sum_{{\lambda}\in\square^-_\kappa}\epsilon^\kappa_{\lambda}\;{_{\beta^\kappa_{\lambda}}}{\cal P}_{\lambda}
&= {_{N^\kappa_\kappa}}Y_\kappa + \sum_{{\lambda}\in\square^-_\kappa}\epsilon^\kappa_{\lambda}\;\sum_{0\leq\lambda'\leq\lambda}{_{\beta^\kappa_{\lambda}N^\lambda_{\lambda'}}}Y_{\lambda'}\\
&= {_{N^\kappa_\kappa}}Y_\kappa + \sum_{{\lambda}\in\square^-_\kappa}\epsilon^\kappa_{\lambda}\;\sum_{0\leq\lambda'\leq\lambda}{_{N^\kappa_{\lambda'}}}Y_{\lambda'}\\
&= {_{N^\kappa_\kappa}}Y_\kappa +\sum_{0\leq\lambda'<\kappa}\left( \sum_{{\lambda}\in\square^-_{\kappa},\lambda\geq\lambda'}\epsilon^\kappa_{\lambda}\right)\;{_{N^\kappa_{\lambda'}}}Y_{\lambda'}\\
&= {_{N^\kappa_\kappa}}Y_\kappa +\sum_{0\leq\lambda'<\kappa}\;{_{N^\kappa_{\lambda'}}}Y_{\lambda'} = {\cal P}_\kappa,
\end{align*}
so that \eqref{eq:recursiveP} holds for $\lambda = \kappa$, as required. The inductive proof is completed by noting that the $\lambda={\bf 0}$ case holds because in that case the sum on the right hand side of \eqref{eq:recursiveP} is empty.
\end{proof}

\subsubsection*{Adding weights}
As in the single-index case, we define weighted versions of the multilevel estimators. We start with local weights $\theta^\lambda_\nu$ for $\nu\in\square^-_\lambda$, which we take to be given (for the moment), and we define
\begin{equation}
    \label{eq:Y}
    Y^\theta_\lambda = P_\lambda - \sum_{\nu\in\square^-_\lambda}\theta^\lambda_\nu P^\lambda_\nu
\end{equation}
and 
\begin{equation}
    \label{eq:P}
    {\cal P}^\theta_\lambda = \ps{\alpha_\lambda}{Y^\theta_\lambda} + \sum_{\nu\in\square^-_\lambda}\theta^\lambda_\nu\ps{\frac{\beta_\lambda}{\beta_\nu}}{{\cal P}^\theta_\nu},
\end{equation}
where the effort parameters $\alpha_\lambda$ and $\beta_\lambda$ are to be chosen to minimise the effort required to compute ${\cal P}^\theta_\lambda$ subject to $\V[{\cal P}^\theta_\lambda] = v^2$.

We are going to look to expand \eqref{eq:P} as a collection of basic estimators $Y_\nu$. Doing this will generate sums of products of the weights $\theta^\lambda_\nu$. We denote these by $\Theta^\lambda_\nu$ for ${\bf 0}\leq\nu,\lambda$, which we define recursively:
\begin{equation}
    \label{eq:Theta}
    \Theta^\lambda_\nu = \begin{cases}
        1 & \word{if} \lambda=\nu\\
        \sum_{\kappa\in\square^-_\lambda}\theta^\lambda_\kappa\Theta^\kappa_\nu & \word{if} \nu <\lambda\\
        0 & \word{otherwise.}
    \end{cases}
\end{equation}
Intuitively, $\Theta^\lambda_\nu$ is the sum of products of the weights $\theta^\kappa_{\kappa'}$ along all possible paths from $\nu$ to $\lambda$. The definition above can be used to recursively define these sums of products, starting from ${\bf 0}$. 

We can now expand \eqref{eq:P}, as follows.
\begin{proposition}
    Suppose that the estimates ${\cal P}^\theta_\lambda$ satisfy the recurrence relation \eqref{eq:P} for ${\bf 0}\leq\lambda$. Then we may write
    \begin{equation}
\label{eq:Pexpanded}
    {\cal P}^\theta_\lambda = \ps{\alpha_\lambda}{Y^\theta_\lambda} +
\sum_{{\bf 0}\leq\nu<\lambda}\Theta^\lambda_\nu \; 
\ps{\frac{\alpha_\nu\beta_\lambda}{\beta_\nu}}{Y^\theta_\nu}.
\end{equation}
\end{proposition}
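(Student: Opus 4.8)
The plan is to argue by induction on $\abs{\lambda}$ (equivalently, on the partial order on $\N_0^d$), expanding the one-step recursion \eqref{eq:P} one level at a time and bookkeeping the scalar coefficient that attaches to each basic estimator. For the base case $\lambda={\bf 0}$ the set $\square^-_{\bf 0}$ is empty, so both \eqref{eq:P} and the claimed identity \eqref{eq:Pexpanded} collapse to $\ps{\alpha_{\bf 0}}{Y^\theta_{\bf 0}}$ and there is nothing to prove. For the inductive step I fix $\lambda>{\bf 0}$ and assume \eqref{eq:Pexpanded} holds for every $\kappa\in\square^-_\lambda$ (each such $\kappa$ satisfies $\kappa<\lambda$).

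I would substitute the inductive hypothesis for each ${\cal P}^\theta_\kappa$ into \eqref{eq:P}, using the composition rule for the nested averaging notation, $\ps{\beta_\lambda/\beta_\kappa}{\ps{N}{Y}}=\ps{(\beta_\lambda/\beta_\kappa)N}{Y}$. The intermediate sample counts then collapse cleanly: the leading term of ${\cal P}^\theta_\kappa$ becomes $\ps{\alpha_\kappa\beta_\lambda/\beta_\kappa}{Y^\theta_\kappa}$, while a term $\ps{\alpha_\nu\beta_\kappa/\beta_\nu}{Y^\theta_\nu}$ arising inside ${\cal P}^\theta_\kappa$ becomes $\ps{\alpha_\nu\beta_\lambda/\beta_\nu}{Y^\theta_\nu}$. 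The key observation here is that the factor $\beta_\kappa$ cancels, so the sample count $\alpha_\nu\beta_\lambda/\beta_\nu$ attached to $Y^\theta_\nu$ is the same regardless of the intermediate index $\kappa$ through which $\nu$ is reached. This path-independence is precisely what allows every appearance of $Y^\theta_\nu$ to be gathered into the single averaged term $\ps{\alpha_\nu\beta_\lambda/\beta_\nu}{Y^\theta_\nu}$ that occurs in \eqref{eq:Pexpanded}.

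It then remains to verify that the coefficient collected in front of $\ps{\alpha_\nu\beta_\lambda/\beta_\nu}{Y^\theta_\nu}$, for each ${\bf 0}\leq\nu<\lambda$, is exactly $\Theta^\lambda_\nu$. Contributions arise in two ways: directly, when $\nu\in\square^-_\lambda$, the leading term $\ps{\alpha_\kappa}{Y^\theta_\kappa}$ of ${\cal P}^\theta_\kappa$ in the case $\kappa=\nu$ contributes $\theta^\lambda_\nu$; and indirectly, each $\kappa\in\square^-_\lambda$ with $\nu<\kappa$ contributes $\theta^\lambda_\kappa\Theta^\kappa_\nu$ through the expanded inner sum. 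Writing the direct contribution as $\theta^\lambda_\nu\,\Theta^\nu_\nu$ (since $\Theta^\nu_\nu=1$) and adopting the convention from \eqref{eq:Theta} that $\Theta^\kappa_\nu=0$ whenever $\nu\not\leq\kappa$, both families of contributions merge into the single sum $\sum_{\kappa\in\square^-_\lambda}\theta^\lambda_\kappa\Theta^\kappa_\nu$, which is the defining expression for $\Theta^\lambda_\nu$ in \eqref{eq:Theta}. This closes the induction.

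I expect the main obstacle to lie in this last paragraph rather than in any analytic estimate. Because the hypercube predecessor sets $\square^-_\lambda$ overlap, a given low index $\nu$ is in general reachable from $\lambda$ through several intermediate indices $\kappa$, and one must account for the degenerate path $\kappa=\nu$ (contributing the unexpanded $Y^\theta_\nu$ term, with implicit coefficient $\Theta^\nu_\nu=1$) on exactly the same footing as the genuinely nested paths, while discarding the spurious terms with $\nu\not\leq\kappa$ via the zero convention. Once the $\beta_\kappa$ cancellation is noticed, the remaining work is purely combinatorial: recognising that the collected coefficient is nothing other than the recursive definition \eqref{eq:Theta}.
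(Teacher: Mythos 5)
Your proposal is correct and follows essentially the same route as the paper's proof: induction on $\lambda$, substitution of the inductive hypothesis into \eqref{eq:P} with the composition rule for nested averages, and an interchange of summation that reproduces the recursive definition \eqref{eq:Theta} of $\Theta^\lambda_\nu$. If anything you are slightly more explicit than the paper in accounting for the degenerate contribution from the leading term $\ps{\alpha_\kappa}{Y^\theta_\kappa}$ via the convention $\Theta^\nu_\nu=1$, which the paper's displayed manipulation absorbs silently.
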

\begin{proof}
    We start by supposing that \eqref{eq:Pexpanded} holds for all $\nu<\lambda$. We can then expand the right hand side of \eqref{eq:P}, using \eqref{eq:Pexpanded}, to obtain
    \begin{align*}
        \ps{\alpha_\lambda}{Y^\theta_\lambda} + 
\sum_{\nu\in\square^-_\lambda}\theta^\lambda_\nu\;\ps{\frac{\beta_\lambda}{\beta_\nu}}{{\cal P}^\theta_\nu} 
&= \ps{\alpha_\lambda}{Y^\theta_\lambda} + 
\sum_{\nu\in\square^-_\lambda}\theta^\lambda_\nu\sum_{{\bf 0}\leq\kappa<\nu}\Theta^\nu_\kappa
\;\ps{\frac{\alpha_\kappa\beta_\lambda}{\beta_\kappa}}{Y^\nu_\kappa}
\\ 
&= \ps{\alpha_\lambda}{Y^\theta_\lambda} + 
\sum_{{\bf 0}\leq\kappa<\nu}\left(\sum_{\nu\in\square^-_\lambda}\theta^\lambda_\nu\Theta^\nu_\kappa\right)
\;\ps{\frac{\alpha_\kappa\beta_\lambda}{\beta_\kappa}}{Y^\nu_\kappa}\\
&= \ps{\alpha_\lambda}{Y^\theta_\lambda} +
\sum_{{\bf 0}\leq\nu<\lambda}\Theta^\lambda_\nu \; 
\ps{\frac{\alpha_\nu\beta_\lambda}{\beta_\nu}}{Y^\theta_\nu}.
        \end{align*}
\end{proof}
Note that the estimators ${\cal P}^\theta_\nu$ on the right hand side of \eqref{eq:P} are combined in such a way as to ensure that the number of samples of $Y_\kappa$ to be computed is consistently defined. Thus at each node $\nu$, $\ps{\frac{\alpha_\nu\beta_\lambda}{\beta_\nu}}{Y^\theta_\nu}$ can be computed, independently of the values at other nodes, using $\frac{\alpha_\nu\beta_\lambda}{\beta_\nu}$ samples\footnote{In practice, of course, the number of samples is rounded to the nearest positive integer.} of $Y^\theta_\nu$.

The work required to compute ${\cal P}^\theta_\lambda$ is
\begin{equation}
\label{eq:WMIMLMCW}
(W^\theta_\lambda)^2 = \alpha_\lambda\eta^2_\lambda + \beta_\lambda\widehat{\eta}^2_\lambda,\word{with}
\widehat{\eta}^2_\lambda = \sum_{{\bf 0}\leq\nu<\lambda}\frac{\alpha_\nu}{\beta_\nu}\eta_\nu^2,
\end{equation}
and, from \eqref{eq:Pexpanded}, 
\begin{equation}
\label{eq:WMIMLMCV}
\V[{\cal P}^\theta_\lambda] = \frac{(\Delta^\theta_\lambda)^2}{\alpha_\lambda}+\frac{(\widehat\Delta^\theta_\lambda)^2}{\beta_\lambda},
\word{with}(\Delta^\theta_\lambda)^2 = \V[Y^\theta_\lambda] \word{and}
(\widehat\Delta^\theta_\lambda)^2 = \V\left[\sum_{{\bf 0}\leq\nu<\lambda}\Theta^\lambda_\nu \; 
\ps{\frac{\alpha_\nu}{\beta_\nu}}{Y^\theta_\nu}\right] = \sum_{{\bf 0}\leq\nu<\lambda}\big(\Theta^\lambda_\nu\Delta^\theta_\nu\big)^2\frac{\beta_\nu}{\alpha_\nu}.
\end{equation}
When $\lambda={\bf 0}$, we can minimise $W^\theta_{\bf 0}$ subject to $\V[P^\theta_{\bf 0}]=v^2$ by setting $\beta_{\bf 0} =\alpha_{\bf 0} = \frac{\sigma_{\bf 0}}{v\eta_{\bf 0}}$. For $\lambda>{\bf 0}$, we assume that $\alpha_\nu$ and $\beta_\nu$ have been determined for all $\nu<\lambda$, and we choose $\alpha_\lambda$ and $\beta_\lambda$ so as to minimise $W^\theta_\lambda$ subject to $\V[{\cal P}^\theta_\lambda] =v^2$. This results in
\begin{equation}
\label{eq:WMIMLMCWab}
W^\theta_\lambda = \frac{1}{v}\big(\eta_\lambda\Delta^\theta_\lambda + \widehat{\eta}_\lambda\widehat{\Delta}^\theta_\lambda\big), \word{with}
\alpha_\lambda = \frac{\Delta^\theta_\lambda W^\theta_\lambda}{v\eta_\lambda}\word{and}
\beta_\lambda = \frac{\widehat\Delta^\theta_\lambda W^\theta_\lambda}{v\widehat{\eta}_\lambda}.
\end{equation}
Note here that the original MIMC method of \cite{Haji-AliNobileTempone2015} corresponds in this context to taking $\theta^\lambda_\nu = \epsilon^\lambda_\nu$. 

It remains to determine the optimal weights $\theta = \{\Theta^\lambda_\nu\}$. We have, by definition, $\Theta^{\bf 0}_{\bf 0} = 1$. We suppose, inductively, that the values of $\Theta^\kappa_\nu$ have been determined for all $\kappa,\nu<\lambda$. We may then expand, from \eqref{eq:WMIMLMCV} and using \eqref{eq:Theta},
\begin{equation}
\label{eq:WMIMLMCR}
(\widehat\Delta^\theta_\lambda)^2 = \sum_{\kappa,\kappa'\in\square^-_\lambda}\theta^\lambda_{\kappa}\theta^\lambda_{\kappa'}R_{\kappa,\kappa'}, \word{where}R_{\kappa,\kappa'}=
\sum_{{\bf 0}\leq\nu\leq\kappa\wedge\kappa'}\Theta^\kappa_\nu\Theta^{\kappa'}_\nu\big(\Delta^\theta_\nu\big)^2\frac{\beta_\nu}{\alpha_\nu}.
\end{equation}
We can write this more succinctly in vector form. We introduce the vectors ${\bf t}_\lambda = [\theta^\lambda_\nu]_{\nu\in\square^-_\lambda}$, ${\bf 0}<\lambda\leq\overline{\lambda}$, allowing us to write
\begin{equation}
\label{eq:DeltaHat}
\widehat\Delta^\theta_\lambda = \sqrt{{\bf t}'_\lambda R_\lambda {\bf t}_\lambda},\word{where}R_\lambda = [R_{\kappa,\kappa'}]_{\kappa,\kappa'\in\square^-_\lambda}.
\end{equation}
We can also express
\begin{equation}
    \label{eq:VarY}
    (\Delta^\theta_\lambda)^2 = \V[Y_\lambda] = 
    \sigma^2_\lambda - 2{\bf c}'_\lambda{\bf t}_\lambda+ {\bf t}'_\lambda C_\lambda {\bf t}_\lambda,
\end{equation} 
where $\sigma_\lambda^2 = \V[P_\lambda]$, ${\bf c}_\lambda=\big[\Cov[P^\lambda_\nu,P^\lambda_\lambda]\big]_{\nu\in\square^-_\lambda}$, and $C_\lambda = \Big[ \Cov[P^\lambda_\nu,P^\lambda_{\nu'}]\Big]_{\nu,\nu'\in\square^-_\lambda}$. 

We can now define the optimal $t_\lambda$ to be the value that minimises
\begin{equation}
\label{eq:thetaopt}
 vW^\theta_\lambda = \eta_\lambda\Delta^\theta_\lambda+\widehat\eta_\lambda\widehat\Delta^\theta_\lambda = \eta_\lambda\sqrt{\sigma^2_\lambda - 2{\bf c}'_\lambda{\bf t}_\lambda+ {\bf t}'_\lambda C_\lambda {\bf t}_\lambda} + \widehat\eta_\lambda\sqrt{{\bf t}'_\lambda R_\lambda {\bf t}_\lambda}.
\end{equation}
It is not, in general, going to be possible to express the optimal value $\widetilde{t}_\lambda$ explicitly, and hence we have no explicit expressions for the corresponding optimal values $\widetilde{W}_\lambda$, $\widetilde{\alpha}_\lambda$, $\widetilde{\beta}_\lambda$, etc. Given estimates for the covariance matrices $\begin{bmatrix}
\sigma_\lambda^2 & {\bf c}'_\lambda\\ {\bf c}_\lambda & C_{\lambda}\end{bmatrix}$, they are, however, readily computable by means of the following sequence of steps, which is continued until $\widetilde{\Theta}^\lambda_\nu$, $\widetilde{\alpha}_\lambda$ and $\widetilde{\beta}_\lambda$ have been found for all ${\bf 0}\leq\nu\leq\lambda\leq\Lambda$, where $\Lambda$ is the highest index at which an estimate is required (c.f.~\eqref{eq:MIMLMCP}). 
\begin{itemize}
\item Initialise the computation by setting $\widetilde{\Delta}_{\bf 0} = \sigma_{\bf 0}$, $\widetilde\beta_{\bf 0} =\widetilde\alpha_{\bf 0} = \frac{\widetilde{\Delta}^2_{\bf 0}}{v^2}$, $\widetilde{W}_{\bf 0} = \frac{\widetilde{\Delta}_{\bf 0}\eta_{\bf 0}}{v}$.
\item Choose ${\bf 0}<\lambda\leq\Lambda$ for which $\widetilde{\Theta}^\kappa_\nu$ have been computed for all ${\bf 0}\leq \nu,\kappa<\lambda$ and $\widetilde{\Delta}_\nu$, $\widetilde{\alpha}_\nu$, $\widetilde{\beta}_\nu$ have been computed for all ${\bf 0}\leq \nu<\lambda$
\begin{itemize}
\item Compute $R_\lambda= [R_{\kappa,\kappa'}]_{\kappa,\kappa'\in\square^-_\lambda}$ using \eqref{eq:WMIMLMCR}.
\item Compute $\widehat\eta_\lambda$ using \eqref{eq:WMIMLMCW}.
\item Find $\widetilde{W}_\lambda$ and the optimising values $\widetilde{\bf t}_\lambda$ by minimising $vW^\theta_\lambda$ in \eqref{eq:thetaopt}.
\item Compute $\widetilde{\Delta}_\lambda$ via \eqref{eq:VarY}, and then $\widetilde{\alpha}_\lambda = \frac{\widetilde{\Delta}_\lambda\widetilde{W}_\lambda}{v\eta_\lambda}$ and $\widetilde{\beta}_\lambda = \frac{\widetilde{W}_\lambda}{v\eta_\lambda}\sqrt{\widetilde{\bf t}'_\lambda R_\lambda \widetilde{\bf t}_\lambda}$.
\item Compute $\widetilde{\Theta}^\lambda_\nu$ for all $\nu<\lambda$ using \eqref{eq:Theta}.
\end{itemize}
\end{itemize}

\section{Numerical experiments}
\label{sect:NumExpts}
In this section we illustrate the performance of the (optimally) weighted (single-index) MLMC method in comparison to MLMC. The implementation of the weighted MIMC method, as well as its performance in a range of settings, will be explored in a subsequent paper (see also \cite{Li2024}).

For the results shown in Figures~\ref{fig:CallEulerGBM2}--\ref{fig:DigitalEulerGBM4}, the covariances, means, optimal weights, etc.~were calculated using $10^6$ samples at each level. When using the methods in practice, the number of samples used would be smaller, and an algorithm along the lines of   Algorithm~1 of \cite{Giles2015} would be used. In Figure~\ref{fig:Histograms} we illustrate numerical results from using such an algorithm, with an initial $20$ samples used at each level to estimate the covariances.

The experiments in this section are based on European option valuation, with three stochastic models (c.f.~\eqref{eq1}) with $a$ and $b$ as shown in Table~\ref{tab:sdes}. We employ two discrete approximations: Euler-Maruyama \eqref{eq:EM} and Milstein \eqref{eq:Mil}:
\begin{equation}
\label{eq:Mil}
    S^l_{j+1}=S^l_j+a(S^l_j)h+b(S^l_j)\Delta W^l_j + \frac12 b(S^l_j)b'(S^l_j)((\Delta W^l_j)^2-h), \hspace{10pt} j=0,1...,J_l-1.
\end{equation}
We also consider three payoff functions, listed in Table~\ref{tab:payoffs}, where we show the discrete forms. In the experiments shown here, we form each of our samples using antithetic variates, so that each SDE path is recomputed using Brownian increments with the opposite sign, the payoffs are computed and the results averaged. We note that, for the Asian payoff, we employ an interpolation-based formula for the intermediate values for $P^l_{l-1}$, as described in Section~5 of \cite{Giles2015}. In  Figure~\ref{fig:DigitalEulerGBM4} we show the results for a digital payoff, and note that here we use the naive implementation (without the payoff smoothing technique described in \cite{Giles2015}). Our purpose here is to illustrate the relative gains possible when adding weights for situations where such smoothing may not be available.

In each of Figures~\ref{fig:CallEulerGBM2}--\ref{fig:DigitalEulerGBM4} we show four graphs. On the left hand side of each figure we show the computational cost (calculated as multiples of the cost of a single sample at level $0$) multiplied by the MSE, as a function of the MSE. The costs for MLMC, WMLMC and single-level MC estimates are shown in the upper figure, while the lower figure shows the ratio between the WMLMC cost and the MLMC cost. On the right hand side of each figure we show the contributions from levels $l=0,\dots,12$ to the total cost of MLMC and WMLMC estimates at level $12$, with the total variance being $0.5\times 10^{-6}$. The cumulative sums of these costs are also shown, marked by triangles, leading to the total costs indicated by solid circles. The bottom right hand graphs show the correlations\footnote{Strictly speaking, the values of $\sqrt{1-\rho_l^2}$ are shown, rather than the correlations themselves.} between $P_l$ and $P^l_{l-1}$, and also the values of $\Theta^L_l$ used for WMLMC.

\begin{table}
\centerline{
\begin{tabular}{r|l}\hline
SDE& Drift and volatility\\
\hline
GBM &$a(S) = 0.05S$,\quad $b(S)=0.2S$\\
IGBM &$a(S) = 2(100-S)$,\quad $b(S)=0.2S$\\
CIR & $a(S) = 2(100-S)$,\quad $b(S)=0.2\sqrt{S}$\\
\hline
\end{tabular}
\vspace*{\baselineskip}
}
\caption{The drift and volatility functions for the SDEs used in the experiments.}
\label{tab:sdes}

\end{table}

\begin{table}
\centerline{
\begin{tabular}{c|c}
\hline
Payoff & Discrete form\\
\hline
Call: $e^{-rT}\max(S_T-100,0)$ & $e^{-rT}\max(S^l_{J_l}-100,0)$\\
Asian: $\displaystyle{\frac{e^{-rT}}{T}\max\left(\int_0^TS_t dt-100,0\right)}$ & 
$\displaystyle{\frac{e^{-rT}}{J_l}\left(\sum_{j=1}^{J_l}S^l_j-100,0\right)}$\\
Digital: $e^{-rT}\max(S_T-100 e^{rT},0)$ &  $e^{-rT}\max(S^l_{J_l}-100 e^{rT},0)$\\
\hline
\end{tabular}
\vspace*{\baselineskip}
}
\caption{The drift and volatility functions for the SDEs used in the experiments.}
\label{tab:payoffs}
\end{table}

\begin{figure}[htb!]
    \centerline{
\begin{minipage}{0.6\linewidth}
    \includegraphics[width=\linewidth]{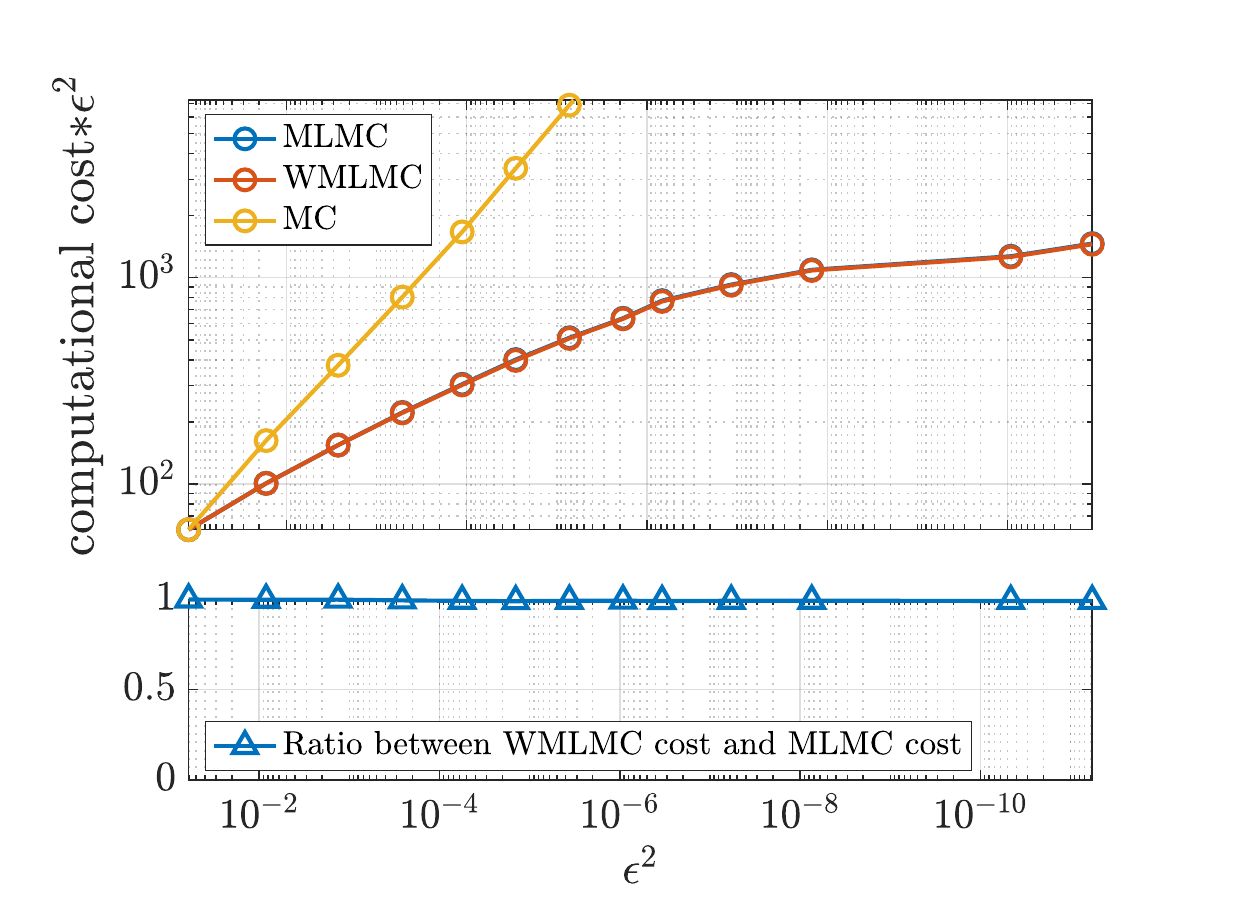}\\
\centerline{(a)}
  \end{minipage}
  \hspace*{-1cm}
\begin{minipage}{0.6\linewidth}
    \includegraphics[width=\linewidth]{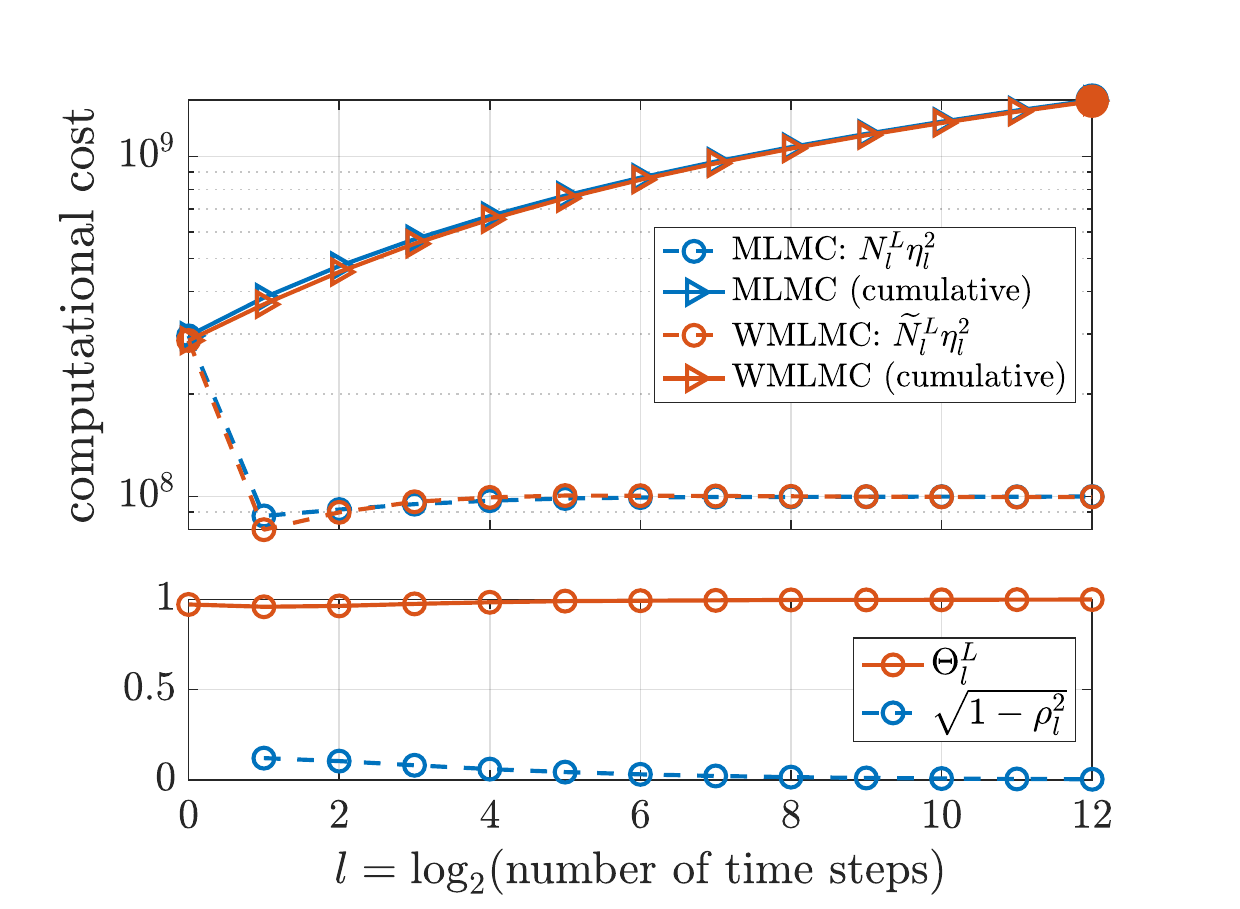}\\
\centerline{(b)}
  \end{minipage}
}
    \caption{European call option under GBM with Euler Maruyama discretisation, $M=2$. (a) Normalised costs for MLMC, WMLMC and single-level Monte Carlo (top) and the ratio between WMLMC and MLMC costs (bottom), shown as functions of the MSE $\epsilon^2$. (b) Contributions at  levels $l=0,\dots,12$ to the cost of WMLMC and MLMC estimates at level $L=12$ (top), with corresponding values of $\Theta^L_l$ and $\sqrt{1-\rho_l^2}$ (bottom). Note that in this case the correlations are near to 1 at all levels, and the costs of the WMLMC and MLMC estimates are virtually identical.} 
\label{fig:CallEulerGBM2}
\end{figure}

The first results we show in Figure~\ref{fig:CallEulerGBM2} serves to show that the WMLMC method does not always offer a measurable improvement over MLMC. In this case, we are evaluating a European call option under GBM with a Euler-Maruyama discretisation, with $M=2$. Note here that the ratio between the MLMC and WMLMC costs shown in the bottom left hand graph is almost exactly 1. 

\begin{figure}[htb!]
    \centerline{
\begin{minipage}{0.6\linewidth}
    \includegraphics[width=\linewidth]{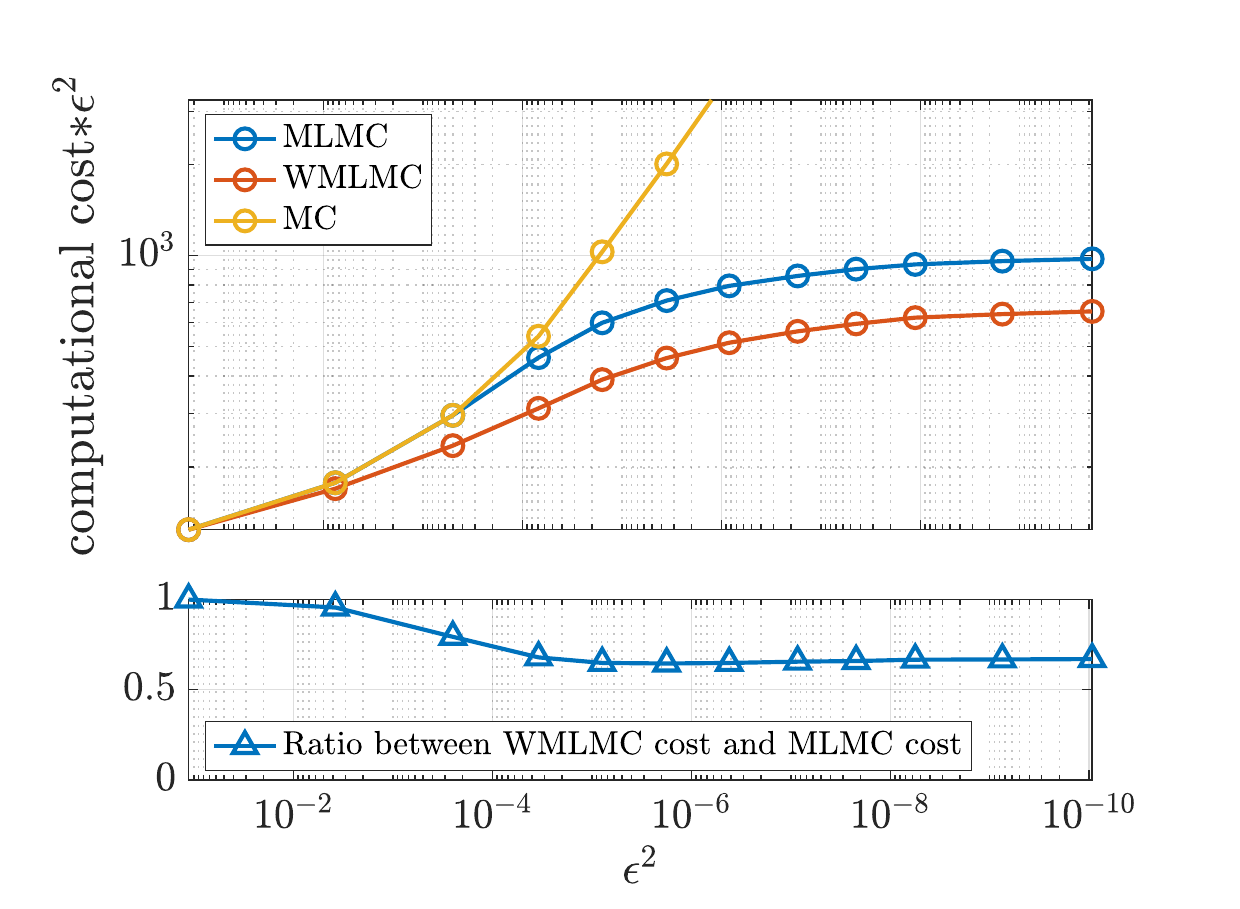}\\
\centerline{(a)}
  \end{minipage}
  \hspace*{-1cm}
\begin{minipage}{0.6\linewidth}
    \includegraphics[width=\linewidth]{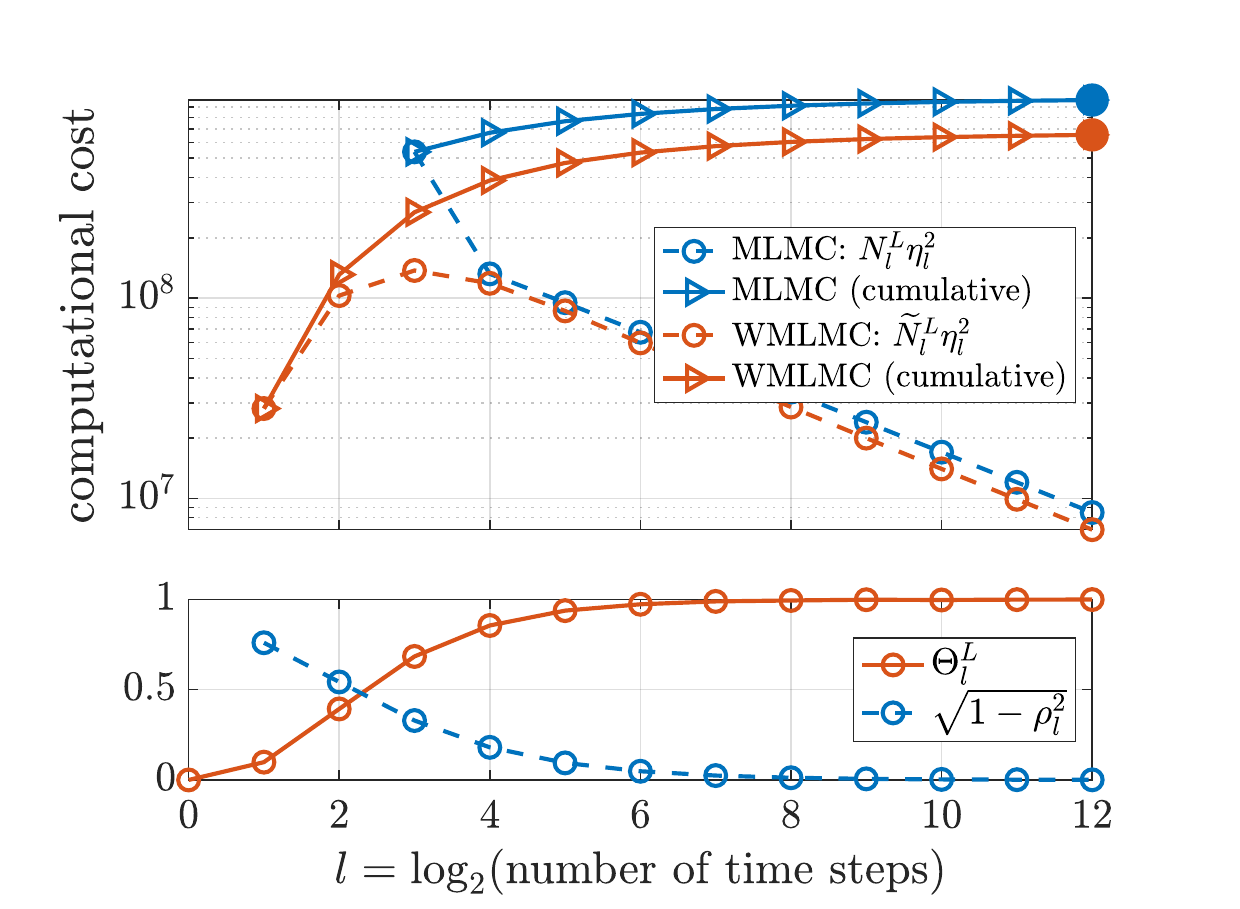}\\
\centerline{(b)}
  \end{minipage}
}
    \caption{Asian call option under GBM with Milstein discretisation, $M=2$. (a) Normalised costs for MLMC, WMLMC and single-level Monte Carlo (top) and the ratio between WMLMC and MLMC costs (bottom), shown as functions of the MSE $\epsilon^2$. (b) Contributions at  levels $l=0,\dots,12$ to the cost of WMLMC and MLMC estimates at level $L=12$ (top), with corresponding values of $\Theta^L_l$ and $\sqrt{1-\rho_l^2}$ (bottom). Note that because the correlatons on the coarsest grids are low, the optimal coarsest level is $l=3$ for MLMC and $l=1$ for WMLMC (so that $\Theta^{12}_0=0$). The ratio between the costs at level 12 is 1.49.}
\label{fig:AsianMilsteinrGBM2}
\end{figure}

In Figure~\ref{fig:AsianMilsteinrGBM2} we show results for the evaluation of an Asian call option under GBM, with the Milstein discretisation. Again $M=2$. This time, however, the correlations between the levels take a while to approach 1. This means, in particular, that for MLMC, it is optimal to take $l=3$ to be the coarsest level (forcing MLMC to use all the levels would result in higher overall cost). In contrast, WMLMC is able to take advantage of coarser-level estimates, with $l=1$ being the optimal choice of coarsest level. The effect of this can be seen in the left-hand graphs, which show that, by the time MLMC starts to differ from MC (at level $l=4$, with a MSE of just below $10^{-4}$), WMLMC has already started to make some efficiency gains. These then persist as the MSE decreases, leading to an eventual WMLMC cost of about 2/3 the MLMC cost.


\begin{figure}[htb!]
    \centerline{
\begin{minipage}{0.6\linewidth}
    \includegraphics[width=\linewidth]{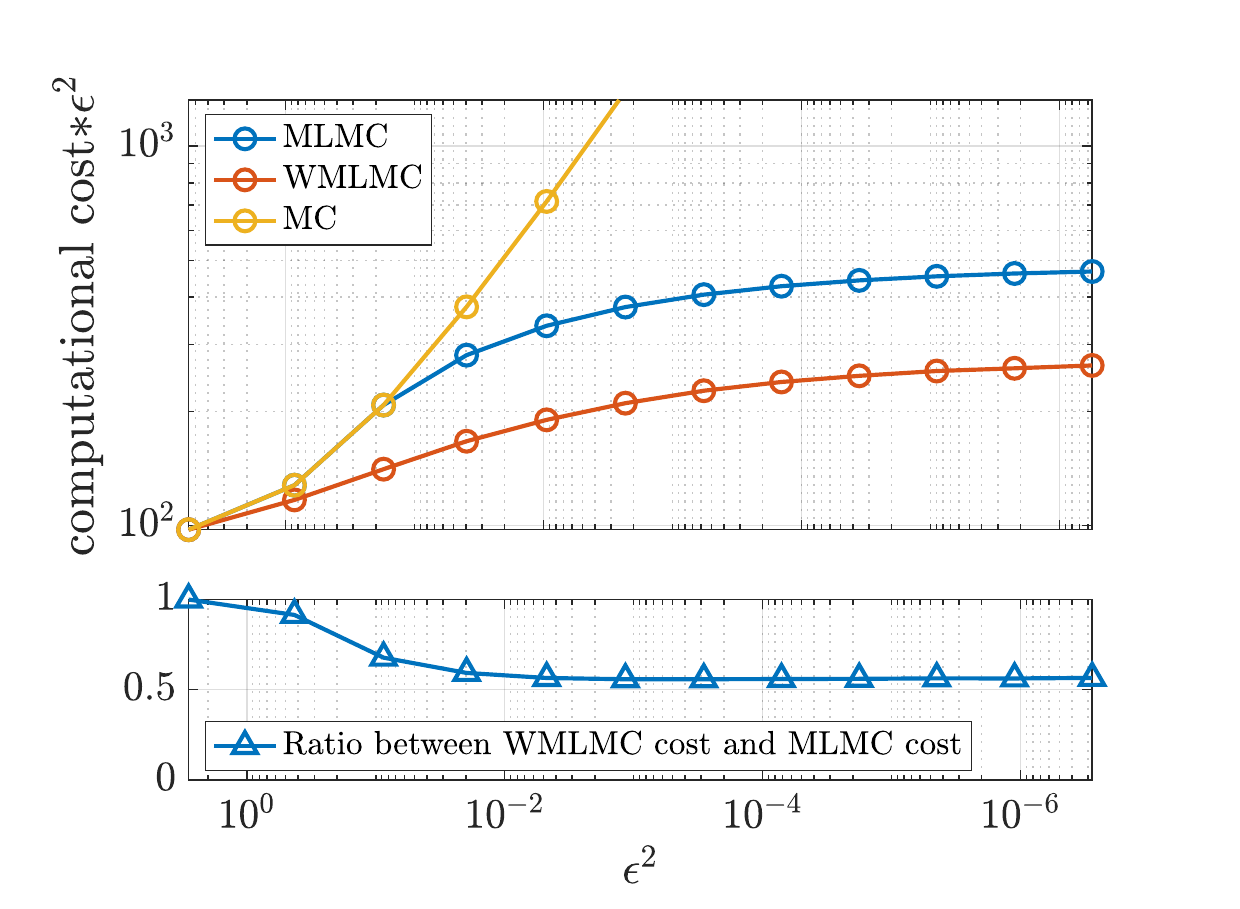}\\
\centerline{(a)}
  \end{minipage}
  \hspace*{-1cm}
\begin{minipage}{0.6\linewidth}
    \includegraphics[width=\linewidth]{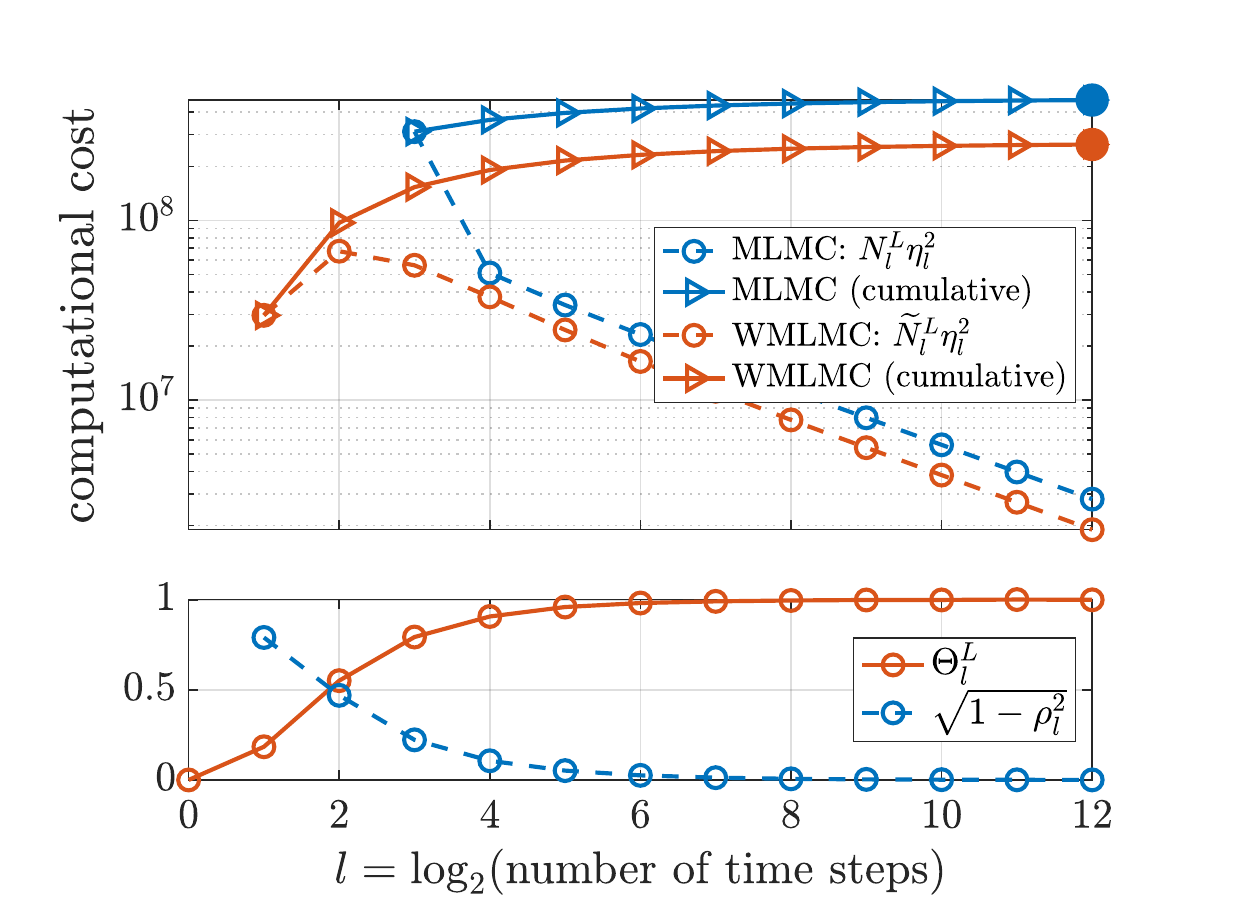}\\
\centerline{(b)}
  \end{minipage}
}
    \caption{European call option under IGBM with Milstein discretisation, $M=2$. (a) Normalised costs for MLMC, WMLMC and single-level Monte Carlo (top) and the ratio between WMLMC and MLMC costs (bottom), shown as functions of the MSE $\epsilon^2$. Note that in this case $\beta=2$ and $\gamma=1$ and so the asymtotic complexity is $O(\epsilon^2)$, and this is reflected in the (slow) levelling off of the costs. (b) Contributions at  levels $l=0,\dots,12$ to the cost of WMLMC and MLMC estimates at level $L=12$ (top), with corresponding values of $\Theta^L_l$ and $\sqrt{1-\rho_l^2}$ (bottom). Note that because the correlatons on the coarsest grids are low, the optimal coarsest level is $l=3$ for MLMC and $l=1$ for WMLMC. The ratio between the costs at level 12 is 1.77.} 
\label{fig:CallMilsteinIGBM2}
\end{figure}

Figure~\ref{fig:CallMilsteinIGBM2} shows results for a call option under IGBM, with a Milstein discretisation, and again $M=2$. Here the complexity is $O(\epsilon^2)$ for both MLMC and WMLMC. However, as in the previous example, WMLMC is able to make efficient use of coarser estimates than is possible for MLMC, resulting in the cost of WMLMC being less than $57\%$ of the MLMC cost.

\begin{figure}[htb!]
    \centerline{
\begin{minipage}{0.6\linewidth}
    \includegraphics[width=\linewidth]{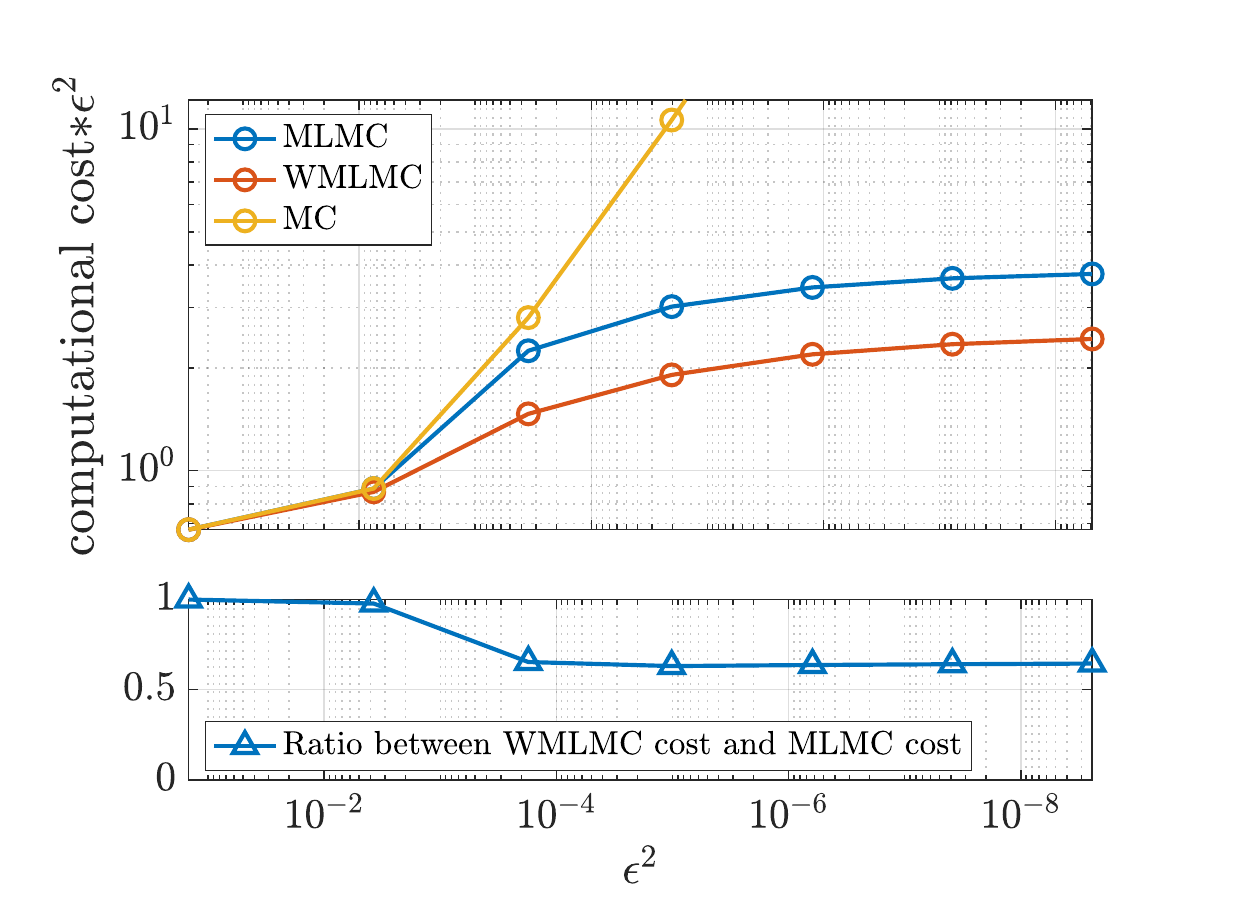}\\
\centerline{(a)}
  \end{minipage}
  \hspace*{-1cm}
\begin{minipage}{0.6\linewidth}
    \includegraphics[width=\linewidth]{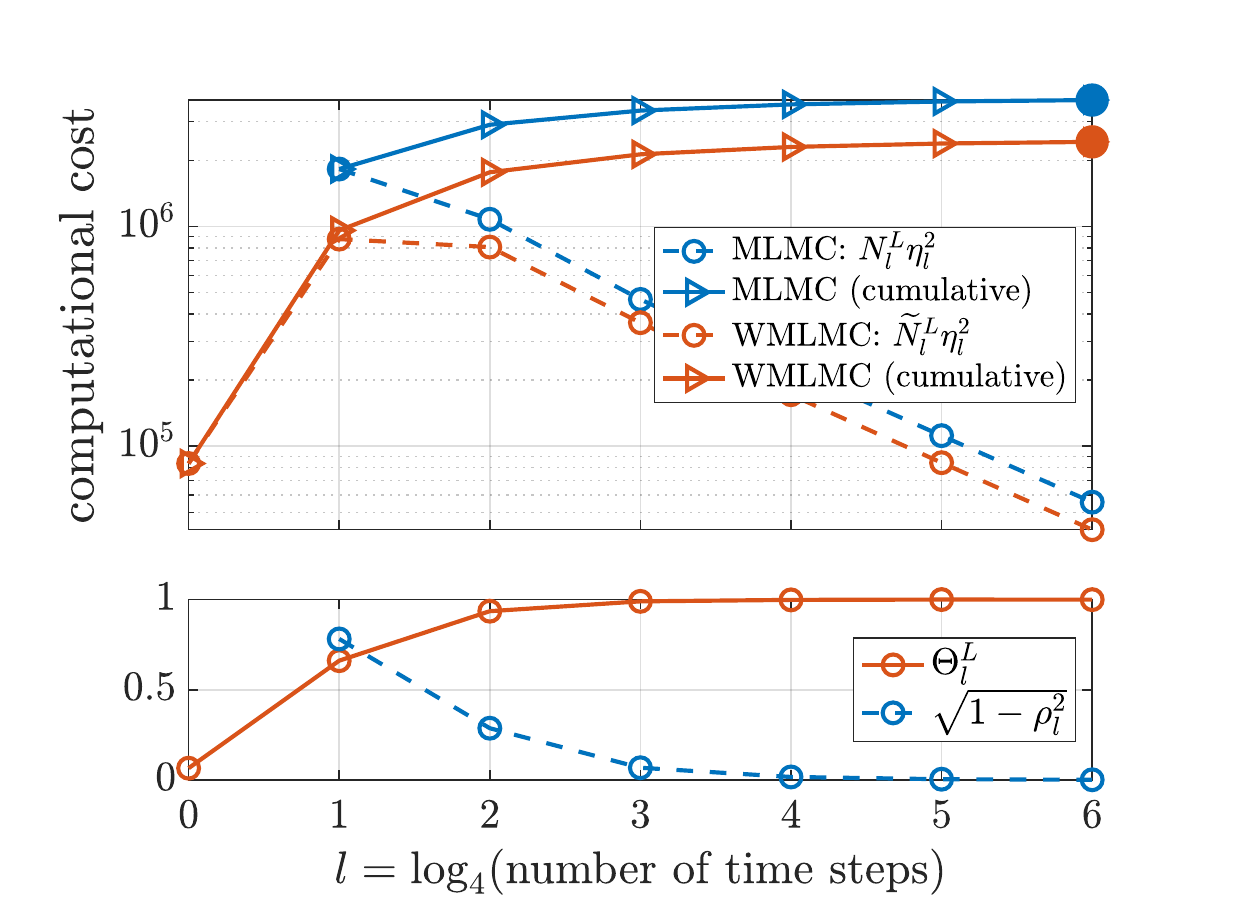}\\
\centerline{(b)}
  \end{minipage}
}
    \caption{European call option under CIR with Milstein discretisation, $M=4$. (a) Normalised costs for MLMC, WMLMC and single-level Monte Carlo (top) and the ratio between WMLMC and MLMC costs (bottom), shown as functions of the MSE $\epsilon^2$. Note that in this case $\beta=4$ and $\gamma=2$ and so the asymtotic complexity is $O(\epsilon^2)$. (b) Contributions at  levels $l=0,\dots,6$ to the cost of WMLMC and MLMC estimates at level $L=6$ (top), with corresponding values of $\Theta^L_l$ and $\sqrt{1-\rho_l^2}$ (bottom). The ratio between the costs at level 6 is 1.55.} 
\label{fig:CallMilsteinCIR2}
\end{figure}

The next experiment involves the CIR process with a Milstein discretisation, and the results are shown in Figure~\ref{fig:CallMilsteinCIR2}. The payoff is a European call, and we take $M=4$. In this case, $\beta=4$ and $\gamma=2$ and so the asymtotic complexity is $O(\epsilon^2)$. As in the previous experiments where this is the case, it shows up in the exponential decrease in level-wise costs in the top right figure, and in the levelling-off of the MLMC and WMLMC costs in the top left figure. Here both MLMC and WMLMC start to make gains over single-level MC estimates at $l=2$. However, the correlation is still relatively far away from 1 at this point, and the choice of optimal weights in WMLMC generates much more significant gains, so that the overall cost ends up being just under $65\%$ of the MLMC cost.

\begin{figure}[htb!]
    \centerline{
\begin{minipage}{0.6\linewidth}
    \includegraphics[width=\linewidth]{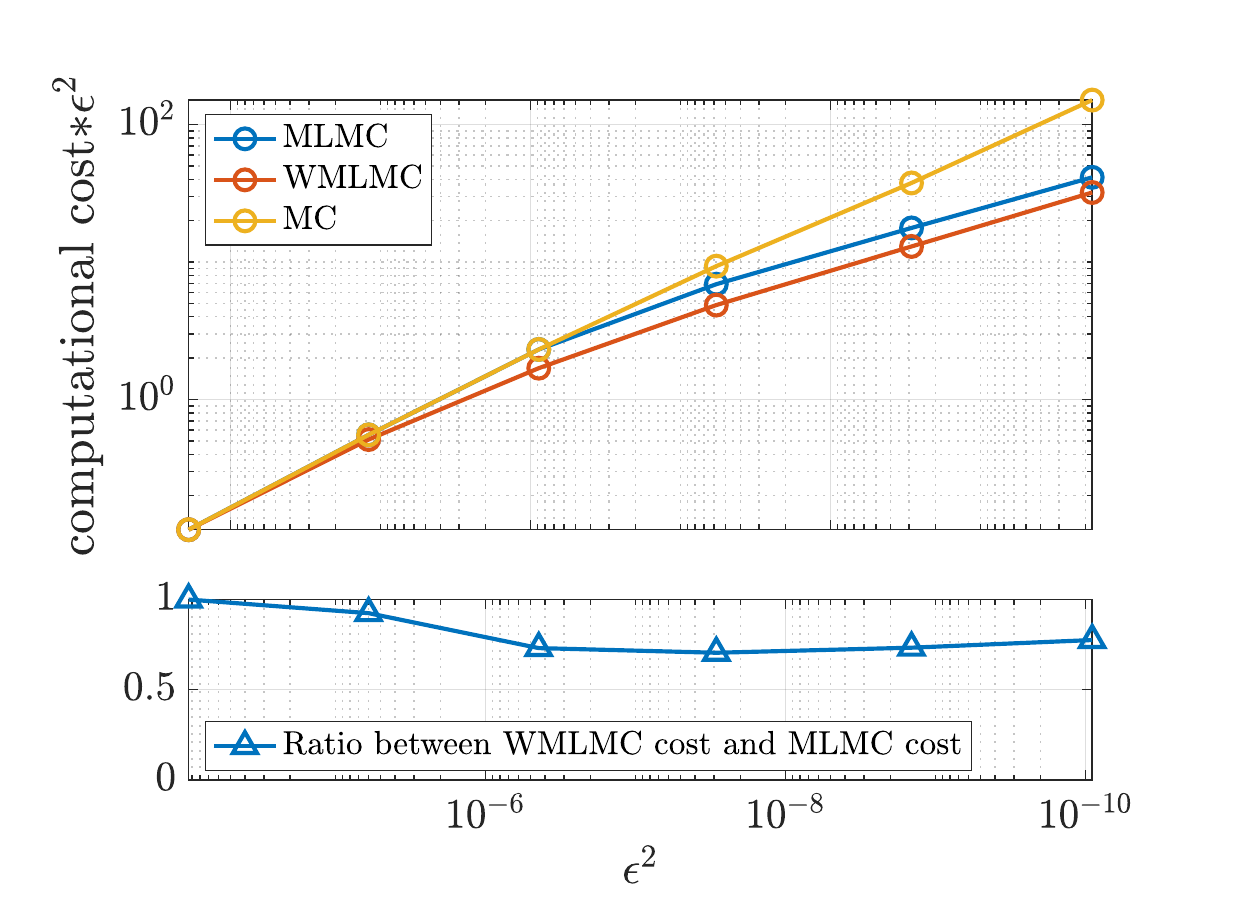}\\
\centerline{(a)}
  \end{minipage}
  \hspace*{-1cm}
\begin{minipage}{0.6\linewidth}
    \includegraphics[width=\linewidth]{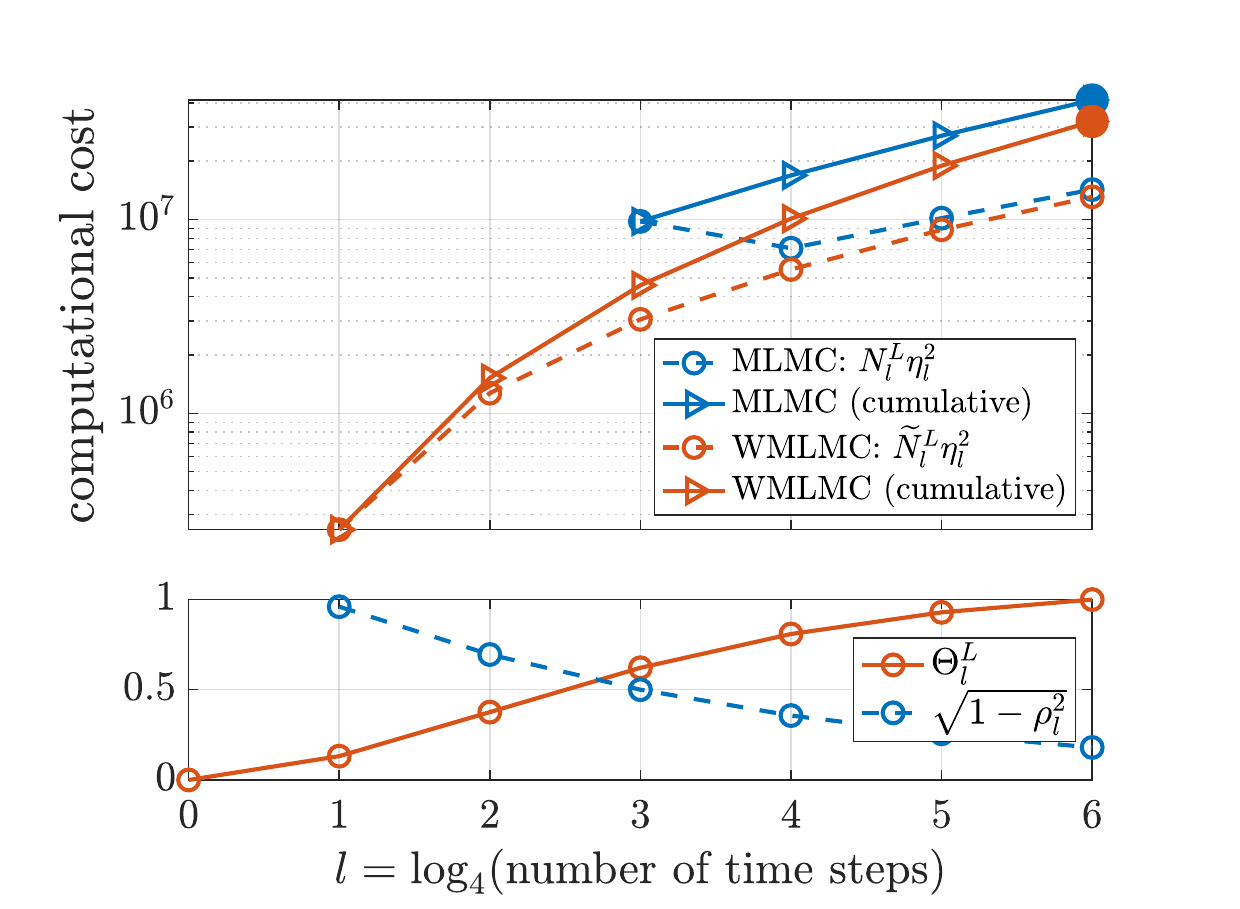}\\
\centerline{(b)}
  \end{minipage}
}
    \caption{Digital option under GBM with Euler-Maruyama discretisation, $M=4$. (a) Normalised costs for MLMC, WMLMC and single-level Monte Carlo (top) and the ratio between WMLMC and MLMC costs (bottom), shown as functions of the MSE $\epsilon^2$. Note that in this case $\alpha=2$, $\beta=1$ and $\gamma=2$ and so the asymtotic complexity is $O(\epsilon^{2.5})$. (b) Contributions at  levels $l=0,\dots,6$ to the cost of WMLMC and MLMC estimates at level $L=6$ (top), with corresponding values of $\Theta^L_l$ and $\sqrt{1-\rho_l^2}$ (bottom). The ratio between the costs at level 6 is 1.29} 
\label{fig:DigitalEulerGBM4}
\end{figure}

In Figure~\ref{fig:DigitalEulerGBM4}, we illustrate a situation where the gains from MLMC (and WMLMC) are relatively small. This can be seen to occur with a naive implementation of a digital option payoff (as a result of the discontinuity in the payoff). We show results for an Euler Maruyama discretisation of GBM, with $M=4$. Here again WMLMC is able to take \emph{some} advantage of coarse estimates with low correlations, and the cost is around $75\%$ of the MLMC cost.

In order to illustrate the potential performance of the MLML and WMLMC methods, 
the experiements so far have been run under `ideal' conditions, using $10^6$ samples at each level, so that we have been able to compute the variances, correlations and associated values of $\theta_l$ and $\Delta_l$, etc.~with very little uncertainty. As mentioned at the beginning of this section, in practice an algorithm along the lines of Algorithm~1 of \cite{Giles2015}. For our final experiment, we implemented such an algorithm 1000 times on the IGBM call option valuation with Milstein discretisation and antithetic variates, using an initial 20 samples at each level (to generate initial estimates of the bias, covariance and correlation). The target MSE was $10^{-6}$, and we aimed to split the error evenly between the contributions from the bias and from the variance. In Figure~\ref{fig:Histograms} we show histograms of the resulting MLMC and WMLMC estimates (left) along with histograms of the computational effort required (i.e.~$\sum_{l=0}^LN^L_l\eta^2_l$; the value of $L$ determined by the algorithm was either $11$ or $12$, depending on the run). The average costs were $4.785\times 10^8$ for MLMC, and $2.831\times 10^8$ for WMLMC, a ratio of about 1.69 (only slightly less than the value of 1.77 obtained under idealised conditions, as shown in Figure~\ref{fig:CallMilsteinIGBM2}). The MSE values, estimated from \eqref{eq:MSE} using a value for $\overline{P}$ computed using WMLMC with a target MSE of $2.5\times 10^{-9}$, were $1.11\times 10^{-6}$ for MLMC and $1.06\times 10^{-6}$ for WMLMC.

\begin{figure}[htb!]
    \centerline{
\begin{minipage}{0.6\linewidth}
    \includegraphics[width=\linewidth]{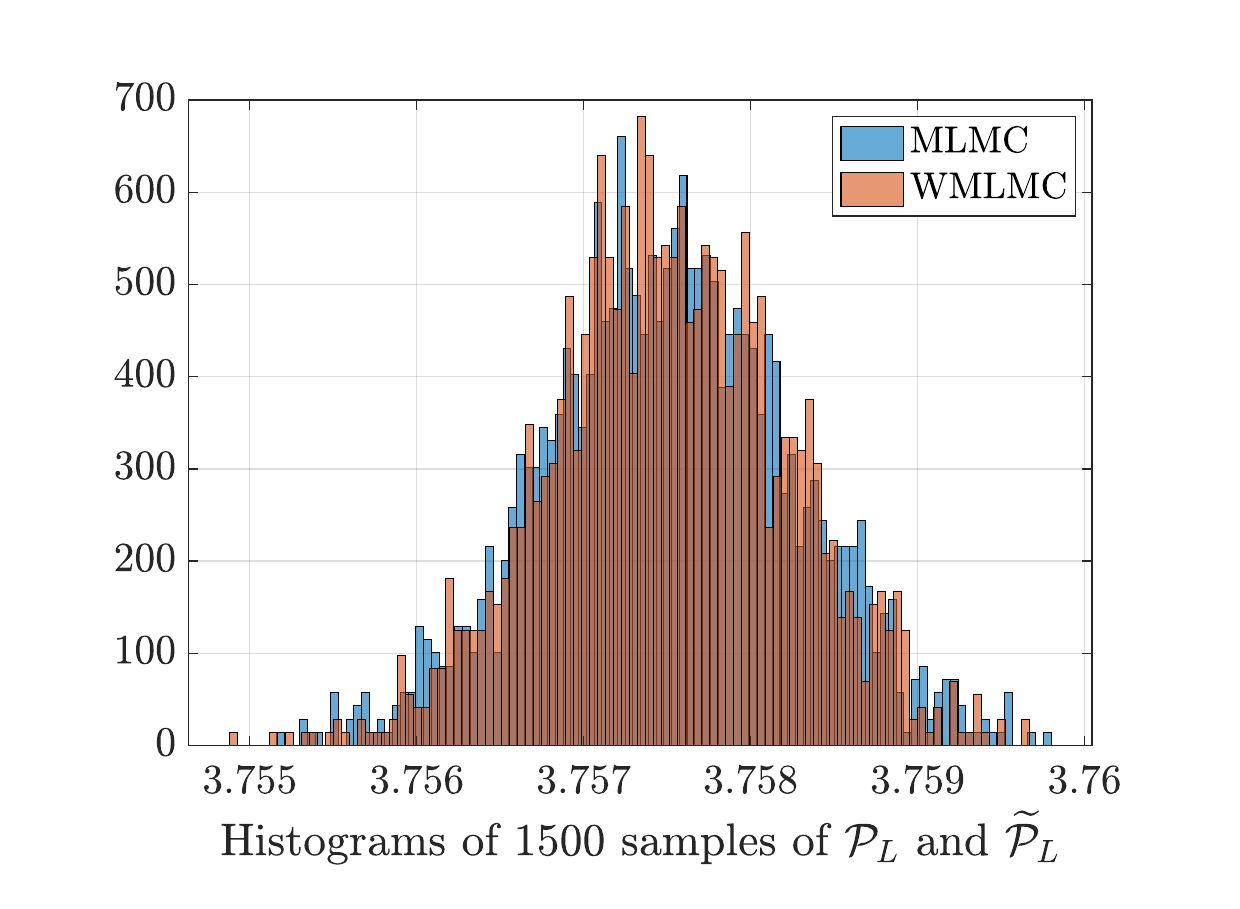}\\
\centerline{(a)}
  \end{minipage}
  \hspace*{-1cm}
\begin{minipage}{0.6\linewidth}
    \includegraphics[width=\linewidth]{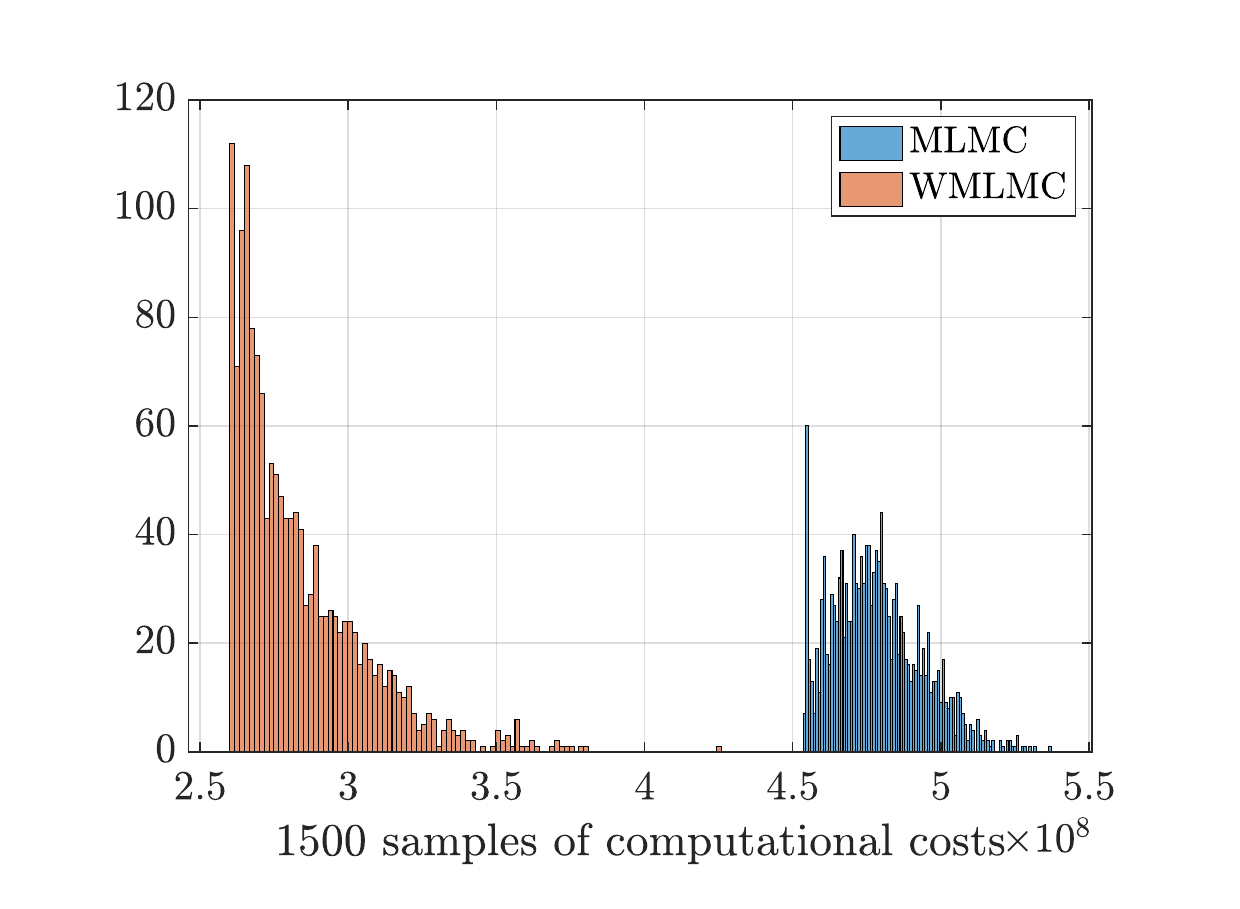}\\
\centerline{(b)}
  \end{minipage}
}
    \caption{Results from the computation of 1500 MLMC and WMLMC estimates of a call option under IGBM with a Milstein discretisation, with $M=2$, and a target MSE of $10^{-6}$. The left hand graph (a) shows histograms of the two sets of estimates. The estimated MSE values are $1.11\times 10^{-6}$ for MLMC, and $1.06\times 10^{-6}$ for WMLMC. The right hand graph (b) shows histograms of the computational costs $\sum_{l=0}^LN^L_l\eta^2_l$ for the two estimators. The value of $L$ determined by the algorithm was either $11$ or $12$, depending on the run.} 
\label{fig:Histograms}
\end{figure}

\section{Discussion}
The addition of weights to the standard MLMC (or MIMC) approach is based on viewing the multilevel estimator as a nested sequence, with each estimator in the sequence using a lower-level multilevel estimator as a control variate. The computation of the optimal weights creates some additional overhead cost, but this is marginal, and so the efficiency savings essentially come `for free'.

As mentioned in the introduction to Section~\ref{sect:NumExpts}, a weighted MLMC approach was proposed in \cite{LemairePages2017}. The weights there are used to improve the bias, rather than the variance, and can be seen as complementary to the weights used here. Indeed, the two approaches can be combined, and the potential gains from this combination are explored in \cite{Li2024}. 

The extension to the multi-index case has been outlined here in the case where the set of indices is in tensor product form. Initial experiments in this setting \cite{WareLi2024} indicate that the potential gains in efficiency are even higher than in the single-index case. Future work will explore this more fully (see also \cite{Li2024}), along with extensions to a weighted version of the non tensor-product MIMC implementation.

 \section*{Acknowledgments}
We acknowledge the support of the Natural Sciences and Engineering Research Council of Canada (NSERC).\\ 
Nous remercions le Conseil de recherches en sciences naturelles et en génie du Canada (CRSNG) de son soutien.

\bibliography{References.bib}
\bibliographystyle{abbrv}

\end{document}